\documentclass[runningheads]{llncs}
\usepackage[T1]{fontenc}
\usepackage[utf8]{inputenc}
\usepackage{amsmath,amssymb,amsfonts}
\usepackage{bm}
\usepackage{graphicx}
\usepackage{tikz}
\usepackage{pgfplots}
\pgfplotsset{compat=1.15}
\usepackage{hyperref}
\usepackage{booktabs}
\usepackage{array}

\usepackage{enumerate}
\usepackage{enumitem}
\usepackage{microtype}
\usepackage{xcolor}

\title{Cost of Manipulation in AMM-Based Oracles}
\author{Sebastian M\"uller\inst{1,2} \and
  Nordine Moumeni \inst{1} \and
  Adel Messaoudi\inst{1}}
\institute{Aix Marseille Univ, CNRS, I2M, Marseille, France \and
  IOTA Foundation, Germany}
\date{}

\begin{document}
\maketitle
\begin{abstract}
We study the robustness of AMM-based on-chain price oracles to 
strategic manipulation. An attacker trades against constant product
automated market makers (CPMMs) to distort an on-chain oracle, 
arbitrageurs restore cross-pool and cross-venue consistency, and an oracle designer chooses how to aggregate pool quotes. 
  
Taking an efficient-market-hypothesis (EMH) view of the off-chain 
``true'' price, we define the \emph{cost of manipulation} as the minimal mark-to-market loss that an attacker must incur to move the oracle by
a given multiplicative factor. For independent CPMMs, we derive 
	closed-form single-pool manipulation formulas and solve the 
	attacker–designer game for weighted means and weighted medians, 
	showing that liquidity weights maximize the minimum cost
	of manipulation within these classes for weighted medians (for any distortion level) and, for weighted means, locally as the distortion tends to zero. For larger distortions, weighted means become more fragile: optimal weights can depend on the target distortion and no single choice is uniformly optimal across distortion levels. 
	In a frictionless CPMM model with cross-pool arbitrage, 
	the manipulation cost depends only on the total quote depth
	and coincides across symmetric aggregators. 

We extend this framework to multi-asset star architectures, confirming that liquidity weights remain optimal in the same sense. Finally, we bridge theory and practice by incorporating dwell times and rate limits, providing a quantitative yardstick to size oracles against the explicit economic costs of attack.
\end{abstract}

\keywords{Automated market makers \and Price oracles \and Decentralized finance \and Market manipulation \and Liquidity-weighted aggregation \and Robust statistics}
\section{Introduction}

Automated market makers (AMMs) have become core infrastructure in decentralized finance (DeFi), serving as venues for liquidity provision and as a source of on-chain price information. Lending protocols, derivatives platforms, and stablecoins routinely rely on on-chain oracles to trigger liquidations, margin calls, and redemptions, often for large notional exposures. In current practice, most production oracles still aggregate off-chain data from centralized exchanges, APIs, or market makers, but a smaller and conceptually important class of ``pure on-chain'' oracles derives prices solely from AMM activity. These designs improve decentralization and verifiability by avoiding external data feeds, yet their security and robustness against manipulation remain only partially understood.

We develop a model of the \emph{cost of manipulation} for such AMM-based oracles. An \emph{attacker} strategically trades against constant-product market makers (CPMMs) in order to push the oracle away from a latent efficient price, an \emph{oracle designer} chooses how to aggregate pool quotes, and \emph{arbitrageurs} trade across pools and external venues to restore price consistency. Adopting an efficient-market-hypothesis (EMH) view of the off-chain price, we measure manipulation cost as the mark-to-market economic loss that an attacker must realize in order to move the oracle by a prescribed multiplicative factor \(r\ge 1\), upward or downward. Our goal is to provide simple, closed-form benchmarks for the robustness of pure on-chain CPMM-based oracles that can serve as a foundation for more elaborate oracle designs.

	At a high level, our results show that in CPMMs the cost of manipulating an oracle admits simple expressions that depend only on liquidity depth and aggregation weights, and that within natural mean- and median-type aggregators, liquidity-proportional weights are max--min optimal (exactly for weighted medians and to second order in a small-distortion expansion for weighted means). The rest of the paper makes this statement precise in single-pool, multi-pool, and multi-asset settings and quantifies the resulting robustness benchmarks for AMM-based oracles.
	
	\paragraph{Mean vs.\ median across distortion levels.}
	For small distortions \(r=1+\varepsilon\), liquidity-weighted means are more expensive to manipulate than liquidity-weighted medians at leading order, reflecting that means ``use'' all pools while medians require only a majority-weight cover.
	For larger distortions, however, weighted means can become substantially more fragile because per-pool CPMM manipulation cost grows sublinearly once a pool is pushed beyond the inflection threshold of the single-pool factor. In this regime, designer-optimal mean weights can depend on the target distortion and there is generally no distortion-uniform max--min choice.
\paragraph{Contributions and Organization.}
Formally, the paper makes five contributions:
\begin{itemize}[leftmargin=*]
	  \item We recall closed-form single-pool formulas for trade
	   size and economic loss as functions of the relative 
	   distortion factor \(r\) under a CPMM, and 
	   extend them to include proportional swap fees via
	  a simple rescaling of inputs (Section~\ref{sec:cpmm-primer}, Appendix~\ref{app:single-pool-derivations}, and Section~\ref{sec:fees}).
  \item We characterize the optimal attacker strategies
   for multi-pool oracles based on weighted means and weighted medians and solve the corresponding designer’s problem. Liquidity weights are max--min optimal for weighted medians for any distortion level, and are locally max--min optimal for weighted means as \(r\to 1\); we also provide an explicit counterexample showing that weighted means admit no distortion-uniform max--min choice beyond small distortions (Section~\ref{sec:single-pair}, Section~\ref{sec:npool-exact}, Theorems~\ref{thm:weighted-mean-opt} and~\ref{thm:median-opt-weights}, and Appendices~\ref{app:weighted-mean-proof}--\ref{app:weighted-median-proof}).
  \item We analyze frictionless cross-pool arbitrage
   and swap fees, noting that under perfect equalization any symmetric aggregator yields the same oracle price and that manipulation cost depends only on total depth, while fees induce no-arbitrage bands that jointly affect arbitrage efficiency and attack cost (Sections~\ref{sec:robustness-metric},~\ref{sec:single-pair}, and~\ref{sec:fees}).
  \item We sketch a multi-asset extension for star architectures around a numéraire and show that liquidity weights remain optimal asset-by-asset (Section~\ref{sec:multi-asset} and Appendix~\ref{app:multi-asset-star}).
  \item We discuss practical extensions that embed our static cost benchmarks into dynamic and implementation-aware settings, introducing a dwell-time metric \(\mathrm{Cost}_A(r,\tau)\) and outlining how platform-specific rate limits and gas costs interact with manipulation incentives in practice (Section~\ref{sec:discussion}).
\end{itemize}

\paragraph{Relation to prior work.}
Our analysis lies at the intersection of CFMM microstructure, oracle manipulation in DeFi, and robust aggregation.
On the CFMM side, Uniswap v2/v3 and the CFMM literature~\cite{uniswapv2,uniswapv3,angeris2020improved} characterize invariant-based pricing, slippage, and arbitrage but do not formulate a general cost of manipulation metric or a max–min designer problem over cross-pool weights.
Oracle-security work~\cite{zhang2020firstlook,qin2021attacks,yang2021flashloan,aspembitova2022oracles,adams2022uniswap,chaos2023twap} studies concrete mechanisms such as TWAPs and documents manipulation events and capital requirements, while treating AMM pricing largely as a black box.
Robust-aggregation results~\cite{hampel2001robust,allouche2024crypto,six2020cryptoindices} motivate volume- and liquidity-weighted means and medians statistically but abstract away from bonding-curve mechanics and strategic attackers.
Our contribution is to connect these strands by deriving closed-form CPMM manipulation costs, solving the associated attacker–designer games for weighted means and medians (in single-asset and star-architecture settings), and showing that liquidity-proportional weights are max–min optimal for weighted medians (for any distortion level) and locally max--min optimal for weighted means near \(r=1\), while weighted means admit no distortion-uniform max--min weights in general. We refer to Section~\ref{sec:related-work} for a more detailed discussion of related literature.

\section{Background and Model}\label{sec:background}

\subsection{Primer on CPMMs and Notation}\label{sec:cpmm-primer}

This subsection collects standard definitions and single-pool formulas for constant-product market makers and fixes the notation used throughout the paper.
In this paper we focus exclusively on Uniswap v2–style \emph{constant-product} AMMs, which we refer to as \emph{constant-product market makers (CPMMs)}: pools whose reserves \((x,y)\) satisfy the invariant \(xy=k\) \cite{uniswapv2,angeris2020improved}. 

\paragraph{Constant-product invariant and marginal price.}
Consider a trading pair \((X,Y)\) with reserves \((x,y)\) in a CPMM pool. The invariant is
\[
  k = x\,y,\qquad k>0,
\]
so any trade moves \((x,y)\) along the curve \(xy=k\). The marginal price of \(X\) in units of \(Y\) is the rate at which the reserves trade locally:
\[
  \text{marginal price of }X\text{ in }Y \;=\; -\frac{dy}{dx} = \frac{y}{x}.
\]
Thus the reserve ratio
\[
  p := \frac{y}{x} \quad (Y\text{ per }X)
\]
is exactly the marginal price on the CPMM curve. This is why on-chain ``spot prices'' for CPMMs are typically read as \(p=y/x\).

\begin{figure}[t]
  \centering
  \begin{tikzpicture}
    \begin{axis}[
      width=0.76\textwidth,
      height=3.7cm,
      axis lines=left,
      xlabel={$x$ (base reserves)},
      ylabel={$y$ (quote reserves)},
      xtick=\empty, ytick=\empty,
      xmin=0, xmax=4.5,
      ymin=0, ymax=4.5,
      grid=major
    ]
      \addplot[domain=0.7:4.2, samples=100, thick, blue] {3/x};

      \addplot[domain=1:4, dashed, black!60] {-0.75*x + 3};
      \node[black!60, anchor=south west] at (axis cs:1.85,.0) {$\frac{dy}{dx}=-\frac{y}{x}$};

      \addplot[mark=*, blue] coordinates {(2,1.5)};
      \node[above right, blue] at (axis cs:2,1.5) {$(x_0,y_0)$};
      \addplot[mark=*, red] coordinates {(1,3)};
      \node[above right, red] at (axis cs:1,3) {$(x_1,y_1)$};

      \draw[->, thick, red] (axis cs:2,1.5) -- (axis cs:1,1.5) node[midway, below] {$\Delta x_{\mathrm{out}}$};
      \draw[->, thick, red] (axis cs:1,1.5) -- (axis cs:1,3) node[midway, left] {$\Delta y_{\mathrm{in}}$};
    \end{axis}
  \end{tikzpicture}
  \caption{The CPMM bonding curve \(xy=k\) and a \(Y\!\to\!X\) trade that increases the marginal price \(p=y/x\).}
  \label{fig:bonding_curve}
\end{figure}

\paragraph{Liquidity depth and the role of \(k\).}
For a fixed external (efficient) price \(p^{\star}\), the invariant \(k\) encodes the pool’s \emph{liquidity depth}. At equilibrium \(p=p^{\star}\) we have
\[
  x_0 = \sqrt{\frac{k}{p^{\star}}},\qquad y_0 = \sqrt{k\,p^{\star}},
\]
so both reserves scale as \(\sqrt{k}\). Moving the on-chain price from \(p_0\) to a new level \(p_1\) requires a finite trade that shifts reserves along the same curve. We parametrize distortions multiplicatively: for a factor \(r\ge 1\), an ``upward'' move targets \(p_1=r\,p_0\) and a ``downward'' move targets \(p_1=p_0/r\). This yields a symmetric notion of manipulation, since the relevant cost factor satisfies \(f(r)=\sqrt r+1/\sqrt r-2=f(1/r)\) as shown below.

\paragraph{Baseline single-pool formulas (as functions of a multiplicative distortion \(r\ge 1\)).}
Consider a CPMM initially at price \(p_0\) with reserves \((x_0,y_0)\) and invariant \(k=x_0y_0\).
\begin{itemize}[leftmargin=*]
  \item \emph{Upward move (factor \(r\)).} To move the on-chain price to \(p_1=r\,p_0\) with a \(Y\!\to\!X\) trade, the attacker must supply
  \begin{equation}\label{eq:y_in_single}
    y^{\mathrm{in}}_{\uparrow}(r) \;=\; y_0\!\left(\sqrt{r} - 1\right).
  \end{equation}
  \item \emph{Downward move (factor \(1/r\)).} To move the price to \(p_1=p_0/r\) with an \(X\!\to\!Y\) trade, the attacker must supply
  \begin{equation}\label{eq:x_in_single}
    x^{\mathrm{in}}_{\downarrow}(r) \;=\; x_0\!\left(\sqrt{r} - 1\right).
  \end{equation}
  \item \emph{Economic cost (mark-to-market at \(p^{\star}=p_0\)).} When the reference price equals the pre-trade pool price, the economic loss in units of \(Y\) is
  \begin{equation}\label{eq:upward_cost}
    C_Y^{\uparrow}(r) \;=\; y_0\!\left( \sqrt{r} + \frac{1}{\sqrt{r}} - 2 \right),
  \end{equation}
  \begin{equation}\label{eq:downward_cost}
    C_Y^{\downarrow}(r) \;=\; y_0\!\left( \sqrt{r} + \frac{1}{\sqrt{r}} - 2 \right).
  \end{equation}
\end{itemize}
The trade sizes \eqref{eq:y_in_single}–\eqref{eq:x_in_single} 
reflect the \(\sqrt{k}\)-scaling of liquidity depth, 
and the cost formulas \eqref{eq:upward_cost}–\eqref{eq:downward_cost}
encode the realized slippage loss relative to the efficient price. 
Detailed derivations are recalled in Appendix~\ref{app:single-pool-derivations}.
These CPMM price-impact and slippage expressions coincide with the 
formulas used in protocol documentation and CFMM analyses; see, for example, 
the Uniswap v2 and v3 whitepapers \cite{uniswapv2,uniswapv3} and the
 CFMM framework of Angeris and Chitra \cite{angeris2020improved}.

\paragraph{Notation and reference price.}
Throughout, we use:
\begin{itemize}[leftmargin=*]
  \item \((X,Y)\): base and quote assets.
  \item \((x,y)\): current pool reserves; \(k=xy\) the CPMM invariant.
  \item \(p = y/x\): on-chain marginal price (\(Y\) per unit \(X\)).
  \item \(p^{\star}\): external ``fair'' or efficient price, unaffected by on-chain trades (e.g., a robust cross-venue midprice).
\end{itemize}
When we consider several pools for the same pair, we write \((x_i,y_i)\), \(k_i\), and \(p_i=y_i/x_i\) for pool \(i\), and collect pool prices into a vector \(\bm{p}=(p_i)_i\). An oracle or price aggregator applies a deterministic functional \(A\) (e.g., liquidity-weighted mean, weighted median, trimmed mean) to \(\bm p\) to obtain the reported oracle price \(\hat p = A(\bm p)\).

\paragraph{Agents and manipulation cost.}
The strategic agents in our model are:
\begin{itemize}[leftmargin=*]
  \item an \emph{oracle designer}, who chooses the aggregation rule \(A\) and fee/parameter settings;
  \item an \emph{attacker}, who submits trades to CPMM pools with the goal of pushing \(\hat p\) away from \(p^{\star}\);
  \item \emph{arbitrageurs}, who trade across pools (and vs.\ external venues) whenever price discrepancies exceed their frictions.
  \end{itemize}
We measure manipulation cost in units of the quote asset \(Y\) as the mark-to-market loss of the attacker’s trades when valued at \(p^{\star}\) (slippage loss relative to the efficient price). This quantity underlies the ``adversarial cost'' or ``manipulation cost'' studied in the rest of the paper.

\subsection{Robustness Metric: Cost of Manipulation}\label{sec:robustness-metric}

We formalize the notion of manipulation cost used throughout the paper. Let \(p^{\star}\) denote the efficient price of a given asset in units of a reference asset, and let \(A\) be an aggregation rule mapping observed pool quotes (and possibly other on-chain data) to an oracle output \(\hat p\). We consider a one-shot setting first and then indicate how to extend to a multi-step dwell requirement.

\begin{definition}[Cost of manipulation]
Fix an aggregation rule \(A\), an efficient price \(p^{\star}\), a distortion factor \(r\ge 1\), and a reference asset (e.g., the quote asset \(Y\) or a numéraire). Let \(\mathcal{U}\) denote the set of admissible attack strategies (collections of on-chain trades on the relevant pools), and write \(C(u)\) for the mark-to-market economic loss of an attack \(u\in\mathcal{U}\) measured in the reference asset at the efficient prices. The cost of manipulation at level \(r\) is
\[
  \mathrm{Cost}_A(r)
  := \inf\Big\{ C(u) : u\in\mathcal{U},\ \max\Big\{\frac{\hat p(u)}{p^{\star}},\,\frac{p^{\star}}{\hat p(u)}\Big\} \ge r\Big\},
\]
where \(\hat p(u) := A(\text{quotes after }u)\) is the oracle output after the attack has been executed and before any corrective arbitrage.
\end{definition}

In words, \(\mathrm{Cost}_A(r)\) is the minimal economic loss an attacker must incur to move the oracle by a factor of at least \(r\) up or down relative to \(p^{\star}\). In the single-CPMM setting with \(A\) equal to the pool price, \(\mathrm{Cost}_A(r)\) reduces to the one-pool formulas \(C_Y^{\uparrow,\downarrow}(r)\) in \eqref{eq:upward_cost}–\eqref{eq:downward_cost}. In the multi-pool mean and median settings, \(\mathrm{Cost}_A(r)\) is given by the solutions of the corresponding multi-pool optimization problems. In the rest of the paper we focus on this ``one-step'' notion under static efficient prices; a dynamic extension with dwell-time constraints is discussed in the outlook section.

\section{Manipulation of Single-Pair Oracles}\label{sec:single-pair}

Having fixed notation and the robustness metric, we now analyze how much economic loss an attacker must incur to manipulate oracles built from one or several CPMM pools that all trade the same asset pair. We start with two pools, then consider the general case of $N$ independent pools aggregated by means or medians, and finally incorporate cross-pool arbitrage.

We will repeatedly use the single-pool cost factor
\[
  f(r) := \sqrt r + \frac{1}{\sqrt r} - 2,
\]
so that, when \(p^{\star}=p_0\), moving a CPMM price from \(p_0\) to \(r\,p_0\) (or to \(p_0/r\)) costs \(C_Y(r)=y_0 f(r)\) in quote units; see \eqref{eq:y_in_single}–\eqref{eq:downward_cost}.

\subsection{Two CPMM Pools}
As a warm-up and to illustrate the main ideas for the general \(N\)-pool setting, we first analyze the case of two independent 
CPMM pools aggregated by a deterministic oracle \(A\).
We index pools by a subscript and indicate time by a superscript in parentheses.
Initial (pre-attack) reserves and prices are \((x_i^{(0)},y_i^{(0)})\) with invariant \(k_i=x_i^{(0)}y_i^{(0)}\) and price \(p_i^{(0)}=y_i^{(0)}/x_i^{(0)}\). 
An attacker produces post-trade quotes \(p_i^{(1)}\). 
The oracle aggregates the updated \(\{p_i^{(1)}\}\) with one of the
following methods.

\subsubsection{Weighted Mean}
Let weights \(w_1,w_2>0\) with \(w_1+w_2=1\) (e.g., by liquidity depth \(w_i\propto\sqrt{k_i}\), equivalently \(w_i\propto y_i^{(0)}\) when pre-trade prices match). The aggregated post-trade price is
\[ \hat p^{(1)} = w_1\, p_{1}^{(1)} + w_2\, p_{2}^{(1)}. \]
If the attacker manipulates \emph{only} pool~1, the aggregator 
sensitivity is \(\partial \hat p^{(1)}/\partial p_{1}^{(1)} = w_1\). 
To move \(\hat p^{(1)}\) by \(\Delta\), one needs \(\Delta p_{1}^{(1)} = \Delta/w_1\). 
Plugging \(p_{1}^{(1)}\mapsto p_{1}^{(1)}+\Delta/w_1\) into
\eqref{eq:y_in_single} or \eqref{eq:x_in_single} gives the 
required trade amounts per direction. We now turn to the question of how an attacker can influence the aggregated price at lowest cost. We start with the optimal attacker strategy for any given weights $w_{1}$ and $w_{2}$.

\paragraph{Optimal Manipulation.} Assume both pools start at the same
 fair price \(p_0=p^{\star}\), so \(p_1^{(0)}=p_2^{(0)}=p_0\). 
 Fix a target distortion factor \(r\ge 1\). Write \(p_i^{(1)}=s_i^2 p_0\) with \(s_i>0\). Achieving \(\hat p^{(1)}=r\,p_0\) at minimum cost reduces to the program
 \[
   \min_{s_1,s_2>0}\ y_1^{(0)}f(s_1^2)+y_2^{(0)}f(s_2^2)
   \quad\text{s.t.}\quad
   w_1 s_1^2+w_2 s_2^2 = r,
 \]
 where \(f(\cdot)\) is the single-pool factor from \eqref{eq:upward_cost}. A downward distortion \(\hat p^{(1)}=p_0/r\) corresponds to replacing \(r\) by \(1/r\); since \(f(r)=f(1/r)\), the minimal cost is the same in both directions.
At an interior optimum, the Lagrange first-order conditions take the form
\[
  y_i^{(0)}\, f'(t_i) = \lambda\, w_i,\qquad t_i:=s_i^2,
\]
for some multiplier \(\lambda>0\). Thus the marginal cost \(f'(t_i)\) is proportional to the weight-to-liquidity ratio \(w_i/y_i^{(0)}\). In particular, whenever the relevant multipliers lie in the convex range \(t_i\in(0,3]\) (so \(f'\) is increasing, see Figure \ref{fig:marginal_cost}), pools with larger \(w_i/y_i^{(0)}\) are pushed to larger \(t_i\), i.e., manipulated more aggressively.

	\begin{figure}[t]
	  \centering
	  \begin{tikzpicture}
	    \begin{axis}[
	      width=0.66\textwidth,
	      height=3.7cm,
	      xmin=1, xmax=10,
	      ymin=0, ymax=0.24,
	      axis x line=bottom,
	      axis y line=left,
      xlabel={Price multiplier \(t=p^{(1)}/p_0\)},
      ylabel={\(f'(t)\)},
      xtick={1, 3, 5, 7, 9},
      ytick={0, 0.05, 0.10, 0.15, 0.20},
      scaled y ticks=false,
      yticklabel style={/pgf/number format/fixed,/pgf/number format/precision=2},
      grid=major
    ]
      \addplot[domain=1:10, samples=200, thick, blue] {0.5*x^(-0.5) - 0.5*x^(-1.5)};
      \addplot[dashed, thick, red] coordinates {(3,0) (3,0.22)};
      \node[red, anchor=south west] at (axis cs:3,0.005) {$t=3$};
      \node[blue, anchor=south] at (axis cs:2.2,0.19) {increasing};
      \node[blue, anchor=north] at (axis cs:6,0.06) {decreasing};
    \end{axis}
  \end{tikzpicture}
  \caption{Marginal manipulation cost \(f'(t)=(t-1)/(2t^{3/2})\) for a single CPMM pool. Marginal cost peaks at \(t=3\) and declines thereafter, which makes concentrated mean attacks attractive once some pools are pushed beyond \(3\times\) their initial price.}
  \label{fig:marginal_cost}
\end{figure}

\subsubsection{Weighted Median }
Let \(p_{(1)}\le p_{(2)}\) be the sorted quotes and corresponding weights \(w_{(1)},w_{(2)}\) 
normalized so that \(w_{(1)}+w_{(2)}=1\). We use the \emph{lower} weighted median, defined as the smallest price level whose cumulative weight is at least $1/2$. Thus \(\tilde p=p_{(1)}\) if \(w_{(1)}\ge 1/2\) and \(\tilde p=p_{(2)}\) otherwise; under the exact tie \(w_{(1)}=w_{(2)}=1/2\), this convention returns \(\tilde p=p_{(1)}\).

 \paragraph{Optimal Manipulation.} With two pools, the weighted median is piecewise constant: the aggregator output is the price of whichever pool carries the majority weight. To move \(\tilde p\), the attacker must either (i) manipulate the dominating pool, or (ii) push the manipulated pool past the other so that the majority-weight index flips. In either subcase, the required trade sizes and costs follow from the single-pool formulas with the appropriate target price level.

\subsection{Multiple CPMM Pools}\label{sec:npool-exact}
We now turn to the general setting of $N$ independent pools. Throughout this subsection, let
\[
  y_{\mathrm{tot}} := \sum_{i=1}^N y_i^{(0)}
\]
denote the total quote depth across pools.
\vspace{-0.3cm} 
\subsubsection{Weighted Mean}
We first define the minimal cost of manipulation \(F_N(\bm w;r)\) for a specific choice of aggregation weights \(\bm w\) and target factor \(r\ge 1\). This is the economic loss an attacker must incur to achieve \(\hat p^{(1)}=r\,p_0\) when they optimize their trades across pools to minimize cost. With \(s_i\) representing the square root of the per-pool price multiplier (i.e., \(p_i^{(1)} = s_i^2 p_0\)), define
\[
  F_N(\bm w;r) \;:=\; \min_{\{s_i>0\}}\; \sum_{i=1}^N y_i^{(0)}\big(s_i+s_i^{-1}-2\big)\quad\text{s.t.}\quad \sum_{i=1}^N w_i s_i^2 = r.
\]
The oracle designer's problem is to choose \(\bm w\) to maximize this adversarial cost. The following theorem gives a small-distortion (second-order) characterization and identifies liquidity weights as locally max--min optimal for weighted-mean oracles near \(r=1\).

\begin{theorem}\label{thm:weighted-mean-opt}
Fix \(r=1+\varepsilon\) with \(\varepsilon\to 0^+\). Then, uniformly over weight vectors \(\bm w=(w_i)_{i=1}^N\) with \(w_i\ge 0\) and \(\sum_i w_i=1\),
\[
  F_N(\bm w;1+\varepsilon)
  \;=\;
  \frac{\varepsilon^2}{4}\,\frac{1}{\sum_{i=1}^N w_i^2/y_i^{(0)}} + o(\varepsilon^2).
\]
Consequently,
\[
  \sup_{w_i\ge 0,\,\sum_i w_i=1} F_N(\bm w;1+\varepsilon)
  \;=\;
  \frac{\varepsilon^2}{4}\,y_{\mathrm{tot}} + o(\varepsilon^2),
\]
and the leading-order term is uniquely maximized by the liquidity weights \(w_i^{\star}=y_i^{(0)}/y_{\mathrm{tot}}\).
\end{theorem}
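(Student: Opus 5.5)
The plan is a second-order expansion around \(s_i=1\). We bracket \(F_N(\bm w;1+\varepsilon)\) between matching upper and lower bounds whose error terms depend on \(\bm w\) only through quantities bounded uniformly on the simplex. Write \(s_i=1+\delta_i\) and use the exact identity
\[
  s+s^{-1}-2=\frac{(s-1)^2}{s}=\frac{\delta^2}{1+\delta}.
\]
The constraint \(\sum_i w_i s_i^2=1+\varepsilon\) becomes \(2\sum_i w_i\delta_i+\sum_i w_i\delta_i^2=\varepsilon\). Dropping the higher-order pieces leaves the quadratic program
\[
  \min \sum_i y_i^{(0)}\delta_i^2\quad\text{s.t.}\quad 2\sum_i w_i\delta_i=\varepsilon .
\]
Its value is \(\frac{\varepsilon^2}{4}\big/S(\bm w)\) with \(S(\bm w):=\sum_i w_i^2/y_i^{(0)}\), by Cauchy--Schwarz or Lagrange, with minimizer \(\delta_i\propto w_i/y_i^{(0)}\). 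Two uniform facts will be used throughout. First, \(1/y_{\mathrm{tot}}\le S(\bm w)\le 1/y_{\min}\), where \(y_{\min}=\min_i y_i^{(0)}\); the lower bound is Cauchy--Schwarz and the upper bound holds because \(w_i^2\le w_i\). Second, \(\sum_i w_i^3/(y_i^{(0)})^2\le 1/y_{\min}^2\).

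\emph{Upper bound.} Take the trial point \(\delta_i=t\,w_i/y_i^{(0)}\). The constraint reads \(2tS+t^2\sum_i w_i^3/(y_i^{(0)})^2=\varepsilon\). By the two bounds above, this has a root \(t=\varepsilon/(2S)+O(\varepsilon^2)\) with constant independent of \(\bm w\), and then \(\max_i|\delta_i|=O(\varepsilon)\) uniformly. The cost is \(\sum_i y_i^{(0)}\delta_i^2(1+O(\varepsilon))=t^2S(1+O(\varepsilon))=\frac{\varepsilon^2}{4S}+O(\varepsilon^3)\). This gives \(F_N\le \frac{\varepsilon^2}{4S}+o(\varepsilon^2)\) uniformly in \(\bm w\).

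\emph{Lower bound.} This is the main obstacle, since an optimal attack might a priori use large \(\delta_i\) on some pools. By the upper bound it suffices to consider feasible \(\bm\delta\) with cost at most \(C\varepsilon^2\), where \(C=y_{\mathrm{tot}}\) is uniform. For such \(\bm\delta\), each term satisfies \(y_{\min}\delta_i^2/(1+\delta_i)\le C\varepsilon^2\), and this forces \(|\delta_i|=O(\varepsilon)\) uniformly; note \(\delta_i>-1\), and for large \(\delta_i\) the left side grows linearly. Hence \(\sum_i w_i\delta_i^2\le\max_i\delta_i^2=O(\varepsilon^2)\), so \(\sum_i w_i\delta_i=\tfrac{\varepsilon}{2}(1-O(\varepsilon))\). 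Cauchy--Schwarz then gives
\[
  \Big(\sum_i w_i\delta_i\Big)^2\le S\sum_i y_i^{(0)}\delta_i^2 .
\]
Combined with cost \(\ge(1-\max_i|\delta_i|)\sum_i y_i^{(0)}\delta_i^2\), this yields cost \(\ge\frac{\varepsilon^2}{4S}(1-O(\varepsilon))\). Since \(S\ge 1/y_{\mathrm{tot}}\), the relative error converts to an absolute \(O(\varepsilon^3)\) error uniformly. Zero weights cause no issue: those pools simply are not moved, and the bounds hold verbatim. Infimum versus minimum is immaterial because both bounds apply to the infimum.

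\emph{Designer problem.} With the expansion uniform, \(\sup_{\bm w}F_N=\frac{\varepsilon^2}{4}\sup_{\bm w}1/S(\bm w)+o(\varepsilon^2)\). By Cauchy--Schwarz,
\[
  1=\Big(\sum_i w_i\Big)^2\le S(\bm w)\,y_{\mathrm{tot}},
\]
with equality iff \(w_i/y_i^{(0)}\) is constant, i.e.\ \(w_i=y_i^{(0)}/y_{\mathrm{tot}}\). This gives the supremum \(\frac{\varepsilon^2}{4}y_{\mathrm{tot}}\) and unique maximization of the leading term by the liquidity weights.
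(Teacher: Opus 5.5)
Your proof is correct and follows the same route as the paper. Both arguments expand to second order near $r=1$, solve the attacker's quadratic program by Cauchy--Schwarz (minimizer $\delta_i\propto w_i/y_i^{(0)}$), and apply Cauchy--Schwarz again to the designer problem, with equality iff $w_i\propto y_i^{(0)}$. The only difference is technical: the paper works in $t_i=s_i^2$, which keeps the constraint exactly linear, and then uses a Taylor expansion plus a localization argument for minimizers. You instead work in $s_i$ with the exact identity $s+s^{-1}-2=(s-1)^2/s$ and explicit two-sided bounds, which makes the uniformity over $\bm w$ explicit where the paper only asserts it.
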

\begin{proof}
Deferred to Appendix~\ref{app:weighted-mean-proof}.
\end{proof}

\paragraph{Convex regime (exact benchmark under a per-pool cap).}
Fix an upward target \(r\in[1,3]\) and consider attacks constrained by \(t_i:=p_i^{(1)}/p_0\in[1,3]\).
On this range \(f\) is convex. Under liquidity weights \(w_i^{\star}=y_i^{(0)}/y_{\mathrm{tot}}\), Jensen’s inequality implies
\[
  \sum_{i=1}^N y_i^{(0)} f(t_i) \;\ge\; y_{\mathrm{tot}}\, f(r),
\]
with equality at \(t_i\equiv r\). Since \(t_i\equiv r\) is feasible for every weight vector, no designer can force manipulation cost above \(y_{\mathrm{tot}}f(r)\), and thus liquidity weights are max--min optimal within this capped convex regime.
In the unconstrained model, by contrast, the aggregate condition \(r<3\) does not preclude concentrated attacks with some \(t_i>3\) when some weights are small.

\paragraph{Convexity threshold and concentrated attacks.}
The per-pool cost factor \(f(t)=\sqrt t+1/\sqrt t-2\) has a crucial inflection point at \(t=3\), since
\[
  f''(t)=\frac{3-t}{4t^{5/2}},
\]
so \(f\) is convex on \((0,3]\) and concave on \([3,\infty)\).
Equivalently, the marginal cost \(f'(t)=(t-1)/(2t^{3/2})\) peaks at \(t=3\) and decreases thereafter; see Fig.~\ref{fig:marginal_cost}.

This implies a qualitative shift in attacker incentives: attacks are dispersed when the relevant multipliers remain in \((0,3]\), but can become concentrated once some pool is pushed beyond \(t=3\). In particular, if \(r>3\) then any feasible \((t_i)_i\) must satisfy \(\max_i t_i\ge r>3\), and even if \(r<3\) a concentrated mean attack may still push some pools past \(t=3\) when some weights \(w_i\) are small.
We leave a full max--min characterization of optimal weighted-mean weights outside the small-distortion regime to future work; Appendix~\ref{app:weighted-mean-proof} gives a simple counterexample showing that liquidity weights need not be max--min optimal at large distortion levels and discusses alternative weight choices.

\subsubsection{Weighted Median}
Let $N\ge 2$ pools start at a common price $p_0$. Write $y_i^{(0)}$ for the quote reserve of pool $i$ and let $w_i>0$ be the aggregation weights normalized so that $\sum_i w_i=1$. Fix a target distortion factor $r\ge 1$ and define the per-pool cost factor
\[
 f(r) \;:=\; \sqrt{r}+\frac{1}{\sqrt{r}}-2,\qquad C_i(r) \;=\; y_i^{(0)} f(r).
\]
The weighted median $\widetilde p$ at $t$ is the smallest price level such that the cumulative weight at or below that level is at least $1/2$. With all quotes initially at $p_0$, to enforce an upward distortion $\widetilde p\ge r\,p_0$ it is necessary and sufficient to move a subset of pools $S\subset\{1,\dots,N\}$ to $r\,p_0$ so that their cumulative weight covers half the mass:
\begin{equation}\label{eq:wmed_cover}
 \sum_{i\in S} w_i \;\ge\; \tfrac{1}{2} \quad \Longleftrightarrow \quad \widetilde p\,\ge\, r\,p_0\,.
\end{equation}
Any pool not in $S$ can remain at $p_0$ (moving it to an intermediate price in $(p_0,rp_0)$ does not change the median). Hence the attacker’s problem reduces to the one-constraint covering program
\begin{equation}\label{eq:wmed_set_cover}
  C_{\text{med}}(r;\bm w) \;=\; \min_{S\subset\{1,\dots,N\}}\; \sum_{i\in S} y_i^{(0)} f(r) \quad \text{s.t.}\quad \sum_{i\in S} w_i \ge \tfrac{1}{2}.
\end{equation}
There is no closed form in general because the feasible set depends on the discrete weight configuration $\{w_i\}$. Nevertheless, the structure is simple and yields an explicit \emph{strategy}:

\paragraph{Optimal strategy (set form).} Because $C_{\text{med}}(r;\bm w)$ is additive across selected pools, it suffices to choose a subset $S$ with total weight at least $1/2$ that minimizes $\sum_{i\in S} y_i^{(0)}$. This ``minimum-cost cover'' is easy to compute for the pool counts that arise in practice, and its structure is transparent: optimal attacks prioritize pools with small depth per unit of weight, i.e., small ratios $y_i^{(0)}/w_i$. A simple greedy candidate is to sort pools by $y_i^{(0)}/w_i$ and add them in this order until the $1/2$ threshold is reached; this is exact in the two-pool and equal-weight cases below. In particular, there is never a reason to overshoot \(rp_0\) on any selected pool.

\paragraph{Special cases and bounds.}
\begin{enumerate}
  \item Two pools: assume without loss of generality that $w_1\ge w_2$. If $w_1>\tfrac{1}{2}$, then the weighted median equals the quote of pool~1, so an optimal attack sets $p_1=r p_0$ and leaves $p_2=p_0$, yielding $C^{\text{med}}=y_1^{(0)}f(r)=y_{\mathrm{major}}^{(0)}f(r)$. In the tie case $w_1=w_2=\tfrac{1}{2}$, our lower-median convention returns the smaller quote, so to enforce $\tilde p\ge r p_0$ one must move \emph{both} pools to $r p_0$, giving $C^{\text{med}}=y_{\mathrm{tot}}f(r)$. If instead ties are resolved by interpolation (e.g., $\tilde p=(p_{(1)}+p_{(2)})/2$), then in the tie case it also suffices to leave one pool at $p_0$ and move the other to $(2r-1)p_0$; choosing the cheaper pool yields cost $y_{\min}^{(0)} f(2r-1)$ with $y_{\min}^{(0)}:=\min(y_1^{(0)},y_2^{(0)})$. Hence, under midpoint interpolation, the tie-case cost is $\min\{y_{\mathrm{tot}} f(r),\,y_{\min}^{(0)} f(2r-1)\}$.
  \item Equal weights ($w_i=1/N$): let $k=\lceil N/2\rceil$ and order depths as $y_{(1)}^{(0)}\le\cdots\le y_{(N)}^{(0)}$. An optimal attack moves the $k$ pools with smallest depths to $r p_0$ (leaving the others at $p_0$), giving the closed form
  \[
    C^{\text{med}}(r)= f(r)\sum_{j=1}^k y_{(j)}^{(0)}.
  \]
  \item Liquidity weights ($w_i\propto y_i^{(0)}$): covering $1/2$ of the total weight requires at least half the total quote depth, so
\[ \tfrac{1}{2}\,y_{\mathrm{tot}}\;\le\;\min_{S:\,\sum w_i\ge 1/2}\sum_{i\in S} y_i^{(0)}\;\le\; \tfrac{1}{2}\,y_{\mathrm{tot}} + y_{\max}^{(0)}, \]
implying the median manipulation cost satisfies the exact bounds
\[ \tfrac{1}{2}\,f(r)\,y_{\mathrm{tot}}\;\le\; C_{\text{med}}(r;\bm w^{\star})\;\le\; f(r)\Big(\tfrac{1}{2}y_{\mathrm{tot}} + y_{\max}^{(0)}\Big). \]
For comparison, for the weighted \emph{mean} with the same liquidity weights, Theorem~\ref{thm:weighted-mean-opt} yields the small-distortion expansion $F_N(\bm w^{\star};1+\varepsilon) = (\varepsilon^2/4)\,y_{\mathrm{tot}} + o(\varepsilon^2)$. Table~\ref{tab:summary} summarizes the contrast between mean- and median-based aggregation across distortion regimes.
\end{enumerate}

We now show that, within the class of weighted-median oracles, liquidity weighting maximizes the attacker’s minimal cost.
\begin{theorem}\label{thm:median-opt-weights}
Let $\bm y=(y_1^{(0)},\dots,y_N^{(0)})$. Define
\[ \Theta(\bm y) := \min\Bigl\{\sum_{i\in S} y_i^{(0)} : S\subset\{1,\dots,N\},\; \sum_{i\in S} y_i^{(0)}\ge \tfrac{1}{2}y_{\mathrm{tot}}\Bigr\}. \]
Then for any $r\ge 1$,
\[ \sup_{\bm w\ge 0,\,\sum w_i=1} C_{\mathrm{med}}(r;\bm w) = f(r)\,\Theta(\bm y), \]
attained by the liquidity weights $w_i^{\star}=y_i^{(0)}/y_{\mathrm{tot}}$. Moreover, $\tfrac{1}{2}y_{\mathrm{tot}}\le \Theta(\bm y)\le \tfrac{1}{2}y_{\mathrm{tot}}+y_{\max}^{(0)}$, and for $N=2$, $\Theta(\bm y)=\max(y_1^{(0)},y_2^{(0)})$.
\end{theorem}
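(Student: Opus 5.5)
The plan is to prove the two inequalities $\sup_{\bm w} C_{\mathrm{med}}(r;\bm w)\ge f(r)\Theta(\bm y)$ and $\le f(r)\Theta(\bm y)$ separately. Throughout I take as given the reduction \eqref{eq:wmed_set_cover}: for any weights, $C_{\mathrm{med}}(r;\bm w)=f(r)\min\{\sum_{i\in S}y_i^{(0)}: \sum_{i\in S}w_i\ge \tfrac12\}$. Since $f(r)\ge 0$ factors out, everything reduces to a purely combinatorial statement about covers.

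\emph{Lower bound (attainment).} Plug in $w_i^\star=y_i^{(0)}/y_{\mathrm{tot}}$. The covering constraint $\sum_{i\in S}w_i^\star\ge\tfrac12$ is literally $\sum_{i\in S}y_i^{(0)}\ge \tfrac12 y_{\mathrm{tot}}$. Hence the minimum in \eqref{eq:wmed_set_cover} is exactly $\Theta(\bm y)$, so $C_{\mathrm{med}}(r;\bm w^\star)=f(r)\Theta(\bm y)$.

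\emph{Upper bound (no weights do better).} This is the step that needs an idea. Fix any $\bm w\ge 0$ with $\sum_i w_i=1$. I must exhibit a set $S$ with $\sum_{i\in S}w_i\ge\tfrac12$ and $\sum_{i\in S}y_i^{(0)}\le\Theta(\bm y)$. Let $T$ be a minimizer in the definition of $\Theta$, so $y(T)=\Theta\ge \tfrac12 y_{\mathrm{tot}}$. Use a complement argument:
\begin{itemize}
\item If $w(T)\ge\tfrac12$, take $S=T$, which has cost $\Theta$.
\item Otherwise $w(T^c)=1-w(T)>\tfrac12$, so $T^c$ is a feasible cover for $\bm w$. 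Its depth is $y(T^c)=y_{\mathrm{tot}}-\Theta\le \tfrac12 y_{\mathrm{tot}}\le\Theta$.
\end{itemize}
In both cases the attacker's minimal cost under $\bm w$ is at most $f(r)\Theta(\bm y)$. Taking the supremum over $\bm w$ gives the reverse inequality, and hence equality, attained at $\bm w^\star$. Zero weights cause no trouble here, since the argument only uses $\sum_i w_i=1$.

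\emph{Bounds on $\Theta$.} The lower bound $\Theta\ge\tfrac12 y_{\mathrm{tot}}$ is immediate from the constraint. For the upper bound, add pools one at a time in any fixed order and stop at the first index where the cumulative depth reaches $\tfrac12 y_{\mathrm{tot}}$. Before the last pool was added, the sum was below $\tfrac12 y_{\mathrm{tot}}$, and the last pool adds at most $y_{\max}^{(0)}$. This produces a feasible set of depth at most $\tfrac12 y_{\mathrm{tot}}+y_{\max}^{(0)}$, which bounds $\Theta$.

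\emph{The case $N=2$.} The candidate sets are $\{1\}$, $\{2\}$ and $\{1,2\}$. The larger pool alone has depth $\max(y_1^{(0)},y_2^{(0)})\ge\tfrac12 y_{\mathrm{tot}}$, so it is feasible. The smaller pool is feasible only if the two depths are equal, in which case it has the same depth. The pair $\{1,2\}$ costs more than the larger pool alone. Hence $\Theta=\max(y_1^{(0)},y_2^{(0)})$.
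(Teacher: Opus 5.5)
Your proof is correct and follows the paper's argument essentially step for step. Attainment at the liquidity weights holds because the covering constraint becomes the depth constraint. The upper bound uses the same complement argument on a $\Theta$-minimizer $T$: if $w(T)<\tfrac12$, then $T^c$ is a cover of depth $y_{\mathrm{tot}}-\Theta\le\tfrac12 y_{\mathrm{tot}}\le\Theta$.

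Your bound $\Theta\le\tfrac12 y_{\mathrm{tot}}+y_{\max}^{(0)}$ relies only on the stopped greedy set being feasible, which is slightly cleaner than the paper's version. The paper sorts depths ascending and asserts the greedy sum equals $\Theta$, which fails in general (e.g.\ depths $(1,1,3)$), though only the inequality is needed.
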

\begin{proof}
Deferred to Appendix~\ref{app:weighted-median-proof}.
\end{proof}

\begin{table}[t]
\centering
\caption{Summary of manipulation costs and optimal weight designs. Here $t_i:=p_i^{(1)}/p_0$ is the per-pool price multiplier, and $\Theta(\bm y)$ is the minimum depth of a majority-weight subset.}
\label{tab:summary}
\small
\setlength{\tabcolsep}{4pt}
\renewcommand{\arraystretch}{1.3}
\begin{tabular}{@{}>{\raggedright\arraybackslash}p{2.6cm}>{\raggedright\arraybackslash}p{5.6cm}>{\raggedright\arraybackslash}p{3.0cm}@{}}
\toprule
\textbf{Scenario} & \textbf{Designer Takeaway (Max--Min)} & \textbf{Minimal Cost} \\ \midrule
\multicolumn{3}{l}{\textit{Independent Pools (No Arbitrage)}} \vspace{2pt} \\
\textbf{Weighted Mean} &
\textbf{Local Optimality ($r \to 1$):} Liquidity weights $w_i \propto y_i^{(0)}$ are optimal. \newline
\textbf{Large Distortions:} Fragile. Once some $t_i>3$, marginal costs decrease and concentrated attacks can dominate; no uniform max--min weights. &
Small distortions: $\propto y_{\mathrm{tot}}$;\newline
large distortions: sublinear \\ \addlinespace[4pt]
\textbf{Weighted Median} &
\textbf{Global Optimality ($\forall r \ge 1$):} Liquidity weights $w_i \propto y_i^{(0)}$ are optimal for \emph{any} target distortion. \newline
Robust: requires manipulating a majority-weight subset. &
$f(r)\,\Theta(\bm y)$ \\ \midrule
\multicolumn{3}{l}{\textit{Perfect Cross-Pool Arbitrage}} \vspace{2pt} \\
\textbf{Any Symmetric Aggregator} &
\textbf{Irrelevance:} Arbitrage equalizes terminal prices $p_i \to p_{\mathrm{tar}}$. Aggregation rule does not affect cost. &
$f(r)\,y_{\mathrm{tot}}$ \\
\bottomrule
\end{tabular}
\end{table}

\subsection{Arbitrage Across Pools}
We now consider frictionless, immediate arbitrage between CPMM pools that all trade the same pair. Any configuration in which some pools are more distorted than others creates an immediate arbitrage opportunity: an arbitrageur can buy in the cheap pool and sell in the expensive one until prices equalize, earning additional profit at the attacker’s expense and partially undoing the oracle distortion. Hence, in any minimal-cost attack under perfect arbitrage, we may restrict attention to equalized terminal configurations in which all pools end at the same target price \(p_{\mathrm{tar}}=r\,p_0\) (or \(p_{\mathrm{tar}}=p_0/r\)).

Assume the system starts at a common efficient price \(p_0\), so \(y_i^{(0)}/x_i^{(0)}=p_0\) and \(k_i=x_i^{(0)}y_i^{(0)}\). In the equalized state, each pool is shifted by the same price multiplier, so the total economic loss is the sum of the single-pool costs, yielding
\begin{equation}\label{eq:arb_min_cost}
  C^{\star}(r) = y_{\mathrm{tot}}\, f(r).
\end{equation}
Thus, under perfect cross-pool arbitrage, the \(N\) CPMMs behave like a single effective pool whose quote reserve equals the total depth \(y_{\mathrm{tot}}\): the cost \eqref{eq:arb_min_cost} is exactly the single-pool expression with \(y_0\) replaced by \(y_{\mathrm{tot}}\). Because equalization forces all post-arbitrage quotes to coincide, any symmetric aggregator (weighted mean, median, trimmed mean, etc.) returns \(p_{\mathrm{tar}}\), so the cost \eqref{eq:arb_min_cost} is independent of the particular symmetric aggregation rule. 

\section{Multi-Asset Extension}\label{sec:multi-asset}
We briefly sketch a multi-asset extension. A particularly clean setting is a \emph{star architecture} around a numéraire, where each non-numéraire asset is priced by aggregating CPMM pools that trade against the numéraire and manipulation cost is measured in numéraire units. Appendix~\ref{app:multi-asset-star} states and proves that, in such architectures, liquidity-proportional weights remain optimal asset-by-asset (exactly for weighted medians and locally for weighted means as \(r_a\to 1\)). Extending this analysis to general CPMM graphs with cross pairs is left for future work.
\section{Incorporating Swap Fees}\label{sec:fees}

Incorporating proportional swap fees into our CPMM cost formulas is straightforward.
With an input fee \(\phi\), only a fraction \((1-\phi)\) of the gross input is credited to the pool, so achieving a fixed distortion factor \(r\) requires gross trade sizes larger by a factor \(1/(1-\phi)\) than in \eqref{eq:y_in_single}–\eqref{eq:x_in_single}. In our one-shot formulas, this acts as a simple multiplicative adjustment of the direct manipulation cost.
In multi-pool settings, fees also widen classical no-arbitrage bands for cross-pool cycles (see \cite{angeris2020improved}), so small cross-pool discrepancies can persist because arbitrage is unprofitable unless they exceed the fee wedge. This weakens corrective arbitrage and can reduce the gross volume required to sustain a discrepancy, even though each swap pays fees.
Finally, implementation costs such as gas fees contribute an additive term per transaction.
Since our goal is a clean depth-driven baseline, we focus on the zero-fee case and treat the interaction between fees, arbitrage efficiency, and gas costs as deployment-specific refinements.

\section{Related Work}\label{sec:related-work}

\paragraph{CFMM microstructure and AMM oracles.}
The Uniswap v2 and v3 whitepapers~\cite{uniswapv2,uniswapv3} and the CFMM framework of Angeris and Chitra~\cite{angeris2020improved} describe invariant-based pricing, depth, and arbitrage for production CPMMs.
Subsequent work analyzes LP risk, fee design, and predictable losses~\cite{bergault2022amm,algebra2022fee,cartea2025predictable}, multi-token AMMs and closed-form arbitrage in $N$-asset pools~\cite{yang2024ntoken}, and routing and coupling effects across CFMM pools~\cite{angeris2022routing,sterrett2025cfmmcoupling}, as well as providing broader surveys of DeFi AMMs and CFMM mechanics~\cite{cartea2025defiamms}.
These papers treat slippage and arbitrage primarily as descriptive properties or sources of LP risk; to the best of our knowledge, none defines a general, closed‑form cost of manipulation metric of the form ``minimal loss to move the price by a given factor'' nor poses a max–min defender problem over cross‑pool weights.

\paragraph{Oracle manipulation, TWAPs, and DeFi security.}
Empirical and systems work documents oracle deviations and attacks in DeFi~\cite{zhang2020firstlook,qin2021attacks,yang2021flashloan}, and surveys oracle architectures and TWAP designs~\cite{aspembitova2022oracles,zhao2022trustworthy,deng2024defioracles}.
Uniswap v3 TWAP studies~\cite{adams2022uniswap,chaos2023twap} compute capital requirements for TWAP manipulation, while large-scale evaluations of Chainlink and cross-chain oracles~\cite{nadler2025chainlink,gansauer2025accuracy} and analysis frameworks such as OVer~\cite{luu2024over,boe2024over} provide risk metrics and stress tests under adversarial inputs.
These contributions quantify attack costs for specific oracle mechanisms (primarily arithmetic TWAPs) and propose mitigations such as time windows and circuit breakers, but most treat AMM pricing as a black box or work numerically with particular TWAP implementations; they do not derive closed-form  manipulation costs as explicit functions of liquidity and distortion, nor do they analyze a general multi-pool attacker--designer game over aggregation weights.

\paragraph{Robust aggregation, liquidity weights, and positioning.}
Robust statistics offers general tools for aggregation under outliers~\cite{hampel2001robust,huber1981robust,maronna2006robust}, and recent crypto-specific work~\cite{allouche2024crypto} derives nonasymptotic error bounds for weighted means and medians applied to exchange price data.
Industry indices such as the SIX Crypto Indices~\cite{six2020cryptoindices} implement volume- or liquidity-weighted medians to down-weight small venues.
These works justify liquidity and volume weights from a statistical-error perspective under contamination models, but do not model CFMM microstructure or a strategic attacker who must trade against bonding curves, and, to the best of our knowledge, they contain no CPMM-aware max--min optimality result for liquidity weights.
Our contribution is deliberately basic: we take the standard CPMM price curve, define a general, closed-form cost of manipulation metric for single and multiple CPMM pools, and then solve the associated attacker–designer problems for weighted means, weighted medians, and a multi-asset star architecture.
To our knowledge, this is the first work to show that, in a CPMM microstructure-aware setting, liquidity weights are max--min optimal for weighted medians (for any distortion level) and locally max--min optimal for weighted means near \(r=1\), and to show by counterexample that weighted means admit no distortion-uniform max--min weights beyond small distortions.

\section{Discussion and Future Work}\label{sec:discussion}
\paragraph{Multi-pool aggregation and on-chain feasibility}
In the independent-pool regime, our analysis shows that liquidity-weighted medians maximize the minimal cost of manipulation for any distortion level, while liquidity-weighted means are locally optimal near \(r=1\) but can be substantially more fragile under large distortions; under perfect cross-pool arbitrage the effective depth entering the cost formulas is simply the total quote reserve across pools.
Robustness is therefore primarily driven by how much CPMM depth backs the oracle and how that depth is distributed across pools, rather than by finer choices among symmetric aggregators.

\paragraph{Dwell-time robustness and protocol interaction}
The static cost of manipulation \(\mathrm{Cost}_A(r)\) studied in the main text can be embedded into a dynamic setting via the dwell-time extension \(\mathrm{Cost}_A(r,\tau)\) (Appendix~\ref{dwell}).
For any given application, one can compute or upper bound the maximal exploitable gain from a mispricing by a factor \(r\) maintained for a dwell \(\tau\); denote this by \(B(r,\tau)\) (e.g., the largest profit from shifting a liquidation threshold or triggering a mispriced payoff).
The deterrence criterion
\[
  \mathrm{Cost}_A(r,\tau) \;\gg\; B(r,\tau)
\]
then provides a quantitative notion of ``economic safety margin''.
Our CPMM cost formulas identify the per-block building blocks entering \(\mathrm{Cost}_A(r,\tau)\); deriving sharp multi-block lower bounds in concrete latency and congestion-control models, and matching them against protocol-specific \(B(r,\tau)\), is an important avenue for future work.

\paragraph{Aggregation rules under different fault models}

Our aggregation results highlight an important difference between CPMM-based on-chain oracles and the exchange-based setting studied by Allouche et al.~\cite{allouche2024crypto}, where trimmed medians are optimal under contamination.
For our economic cost of manipulation metric on \emph{on-chain} CPMM data, liquidity-weighted means are more expensive to manipulate than liquidity-weighted medians for small distortions in the independent-pool model, reflecting that means ``use'' all pools while medians only require a majority-weight cover. However, because per-pool CPMM manipulation cost grows sublinearly once a pool is pushed beyond the inflection threshold of the single-pool factor, weighted means can become significantly more fragile under large distortions; in this regime, designer-optimal mean weights can depend on the target distortion and there is no distortion-uniform max--min choice in general. Weighted medians retain a large-distortion guarantee: moving the oracle requires manipulating a subset covering at least half of the aggregation weight, regardless of how large \(r\) is.
Smart-contract bugs, misconfigured pools, or governance attacks can still create persistently faulty on-chain venues.
In such cases, median- or trimmed-mean aggregation over pools may be more appropriate to discount structurally broken pools while still using liquidity weights within the remaining set. Liquidity-weighted means remain natural when all CPMMs are correct and manipulation occurs only via trading.

\paragraph{Beyond CPMMs}

In practice, concentrated-liquidity AMMs (CLMMs) such as Uniswap v3 are widely used. A convenient reduced-form view is that they induce a state-dependent effective depth \(y_{\mathrm{eff}}(p)\) aggregating all active liquidity at price \(p\).
The cost of moving the price from \(p_0\) to \(p_1=r\,p_0\) (or \(p_1=p_0/r\)) is then obtained by integrating the CPMM slippage formulas along the path in price with \(y_0\) replaced locally by \(y_{\mathrm{eff}}(p)\), preserving the convexity and monotonicity properties that underpin our multi-pool optimization at the expense of simple closed forms in \(r\).
Extending our results to CLMMs requires modeling the effective depth profile \(y_{\mathrm{eff}}(p)\); we leave a full treatment of CLMM microstructure and tick dynamics to future work.

\section{Conclusion}

We introduced a quantitative notion of cost of manipulation for AMM-based oracles and analyzed how it depends on liquidity depth, aggregation rules, and arbitrage connectivity.
For a single CPMM pool, we recalled closed-form formulas for the trade size and economic loss required to move prices by a factor \(r\) (upward or downward).
In independent multi-pool settings, we solved the attacker–designer game for weighted medians (all distortion levels) and for weighted means locally near \(r=1\). We also showed by counterexample that weighted means admit no distortion-uniform optimal weights beyond small distortions, while under frictionless cross-pool arbitrage the cost collapses to simple total-depth expressions that are independent of the particular symmetric aggregator.
For weighted means, we also highlighted the inflection of the single-pool cost factor at multiplier \(t=3\), which induces a qualitative shift from dispersed to concentrated optimal attacks and motivates either median-type aggregation or distortion-aware weighting that down-weights shallow pools more aggressively in the large-distortion regime.
We extended this analysis to a multi-asset star architecture and proved an analogous optimality result for per-asset weights.
Finally, the dwell-time and rate-limit extensions illustrate how these static cost benchmarks can be combined with chain- and application-specific models to design AMM-based oracles whose manipulation cost dominates the economic gains available from induced mispricings.
\newpage

\bibliographystyle{abbrv}
\bibliography{references}

\appendix

\section{Sustained Manipulation and Protocol Constraints}\label{dwell}

The one-step cost of manipulation considered above abstracts away from timing and protocol-level limits. In practice, block structure, rate limits on shared objects, and latency all constrain both attackers and arbitrageurs and motivate a dynamic extension of our metric and of the comparison with application-specific benefit functions \(B(r,\tau)\).

\paragraph{Sustained distortions with dwell time.}
Let time be indexed by blocks \(t=0,1,\dots\), and let \(A_t\) be the aggregation rule at time \(t\), possibly incorporating time-windowed statistics (e.g., TWAPs). The attacker submits a sequence of trades \(u_0,\dots,u_{T-1}\), incurring cumulative loss \(C(u_{0:T-1})\). For a dwell parameter \(\tau\), a natural extension of our robustness metric is the minimal cost needed to keep the oracle distorted by a factor of at least \(r\) for \(\tau\) consecutive blocks:
\[
  D_t := \max\Big\{\frac{\hat p_t}{p_t^{\star}},\,\frac{p_t^{\star}}{\hat p_t}\Big\}.
\]
\[
  \mathrm{Cost}_A(r,\tau;T)
  := \inf\Big\{ C(u_{0:T-1}) : \exists t_0\ \text{s.t. }D_t\ge r\ \forall t\in[t_0,t_0+\tau-1]\Big\},
\]
with \(\hat p_t = A_t(\text{quotes after }u_0,\dots,u_t)\) and \(p_t^{\star}\) the efficient price process. The asymptotic cost \(\mathrm{Cost}_A(r,\tau)\) is defined via a liminf as \(T\to\infty\). Our static results identify the per-block building blocks entering \(\mathrm{Cost}_A(r,\tau)\); deriving sharp multi-block lower bounds in specific latency models, and matching them against protocol-specific upper bounds \(B(r,\tau)\), is a natural direction for further work and for quantitative risk budgeting.

\paragraph{Rate limits and shared-object congestion (Sui example).}
On high-throughput platforms with parallel execution such as Sui, contention on shared objects (e.g., a CPMM pool) becomes a bottleneck, so protocols rate-control how many transactions per checkpoint may touch the same shared object.
This interacts with the dwell-time notion of \(\mathrm{Cost}_A(r,\tau)\) by limiting how quickly arbitrageurs can respond to distortions and how many manipulative trades an attacker can sustain per block.
A detailed analysis of such rate limits—combining our static cost formulas with models of block-level access constraints, spam behavior, and explicit \(B(r,\tau)\) for concrete lending and derivatives protocols—is left for future work and is particularly relevant for Sui-style shared-object architectures.

\section{Derivations for Single-Pool Formulas}\label{app:single-pool-derivations}

We briefly derive the baseline single-pool manipulation formulas used throughout the paper.

\subsection{Trade needed to hit a relative price target}

Consider a CPMM with reserves \((x_0,y_0)\), invariant \(k=x_0y_0\) and initial price \(p_0=y_0/x_0\). Fix a target factor \(r\ge 1\). For an upward move we target \(p_1=r\,p_0\), and for a downward move we target \(p_1=p_0/r\). Post-trade reserves \((x_1,y_1)\) must satisfy
\[
  p_1 = \frac{y_1}{x_1},\qquad x_1 y_1 = k.
\]
Solving these two equations yields
\[
  x_1 = \sqrt{\frac{k}{p_1}},\qquad y_1 = \sqrt{k\,p_1}.
\]
Using \(y_0=\sqrt{k p_0}\) and \(x_0=\sqrt{k/p_0}\), write \(s=\sqrt r\). Then:
\begin{itemize}[leftmargin=*]
  \item Upward move (\(p_1=r\,p_0\)): \(y_1=y_0 s\), so \(y^{\mathrm{in}}_{\uparrow}(r)=y_1-y_0=y_0(s-1)\), which is \eqref{eq:y_in_single}.
  \item Downward move (\(p_1=p_0/r\)): \(x_1=x_0 s\), so \(x^{\mathrm{in}}_{\downarrow}(r)=x_1-x_0=x_0(s-1)\), which is \eqref{eq:x_in_single}.
\end{itemize}
These expressions make explicit that trade size scales like \(\sqrt{k}\) and depends on the target only through \(r\).

\subsection{Economic cost at \(p^{\star}=p_0\)}

We mark attacker losses to market at a reference price \(p^{\star}\) representing the external efficient value. For the baseline case \(p^{\star}=p_0\):
\begin{itemize}[leftmargin=*]
  \item \emph{Upward move.} A \(Y\!\to\!X\) trade of size \(y^{\mathrm{in}}_{\uparrow}\) sends reserves from \((x_0,y_0)\) to \((x_1,y_1)\) with \(x_1y_1=k\). The attacker receives
  \[
    x^{\mathrm{out}} = x_0 - x_1 = x_0 - \frac{k}{y_0 + y^{\mathrm{in}}_{\uparrow}}.
  \]
  Valued at \(p^{\star}=p_0\), the loss in \(Y\) units is
  \[
    C_Y^{\uparrow} = y^{\mathrm{in}}_{\uparrow} - p_0 x^{\mathrm{out}}.
  \]
  Substituting \(y^{\mathrm{in}}_{\uparrow}=y_0(s-1)\), \(x^{\mathrm{out}}=x_0(1-1/s)\) and \(p_0 x_0=y_0\) yields
  \[
    C_Y^{\uparrow}(r) = y_0\!\left(s+\tfrac{1}{s}-2\right),
  \]
  which is \eqref{eq:upward_cost}.
  \item \emph{Downward move.} An \(X\!\to\!Y\) trade of size \(x^{\mathrm{in}}_{\downarrow}\) yields
  \[
    y^{\mathrm{out}} = y_0 - y_1 = y_0 - \frac{k}{x_0 + x^{\mathrm{in}}_{\downarrow}}.
  \]
  The loss in \(Y\) units is
  \[
    C_Y^{\downarrow} = p_0 x^{\mathrm{in}}_{\downarrow} - y^{\mathrm{out}}.
  \]
  With \(x^{\mathrm{in}}_{\downarrow}=x_0(s-1)\), \(y^{\mathrm{out}}=y_0(1-1/s)\) and \(p_0 x_0 = y_0\), we obtain
  \[
    C_Y^{\downarrow}(r) = y_0\!\left(s+\tfrac{1}{s}-2\right),
  \]
  which is \eqref{eq:downward_cost}.
\end{itemize}
In both directions, \(C_Y\) coincides with the standard notion of slippage loss (difference between what the attacker pays and the fair value of what they receive) used in CFMM analyses such as \cite{angeris2020improved}.

\section{Weighted-Mean Multi-Pool Optimization}\label{app:weighted-mean-proof}

In this appendix we provide a detailed proof of Theorem~\ref{thm:weighted-mean-opt}.
We work with price multipliers $t_i:=s_i^2$ (so that $t_i= p_i^{(1)}/p_0$) and write the attacker’s problem as
\[
  F_N(\bm w;r)
  = \min\Big\{\sum_{i=1}^N y_i^{(0)} f(t_i) : t_i>0,\ \sum_{i=1}^N w_i t_i = r\Big\},
  \qquad
  f(t)=\sqrt{t}+\frac{1}{\sqrt{t}}-2.
\]

\paragraph{Why we focus on small distortions.}
The key technical feature is that the per-pool factor $f$ is not globally convex: a direct calculation gives
\[
  f''(t)=\frac{3-t}{4t^{5/2}},
\]
so $f$ is convex on $(0,3]$ and concave on $[3,\infty)$.
Equivalently, the marginal cost $f'(t)=(t-1)/(2t^{3/2})$ increases on $(0,3]$ and decreases on $[3,\infty)$; see Fig.~\ref{fig:marginal_cost}.
When some pools are pushed beyond the inflection point $t=3$, the attacker can benefit from concentrating distortion on a small-weight pool while leaving most pools close to $t=1$; this may occur even for moderate aggregate targets $r$ if some $w_i$ are small.
For a concrete illustration, take $N=2$ pools with quote depths $(y_1^{(0)},y_2^{(0)})=(1,M)$ and liquidity weights $w^{\star}=(\tfrac{1}{M+1},\tfrac{M}{M+1})$. Fix any target $r>1$ and set $t_2=1$ and $t_1=1+(r-1)(M+1)$; then
\[
  w_1^{\star}t_1+w_2^{\star}t_2=\frac{1}{M+1}\big(1+(r-1)(M+1)\big)+\frac{M}{M+1}\cdot 1=r,
\]
so the attack is feasible at cost $F_2(w^{\star};r)\le f(1+(r-1)(M+1))$, while the pooled benchmark costs $(M+1)f(r)$.
	Since $f(t)\le \sqrt{t}$ for all $t\ge 1$ and $f(t)\sim \sqrt{t}$ as $t\to\infty$, the ratio between these two costs is $O(M^{-1/2})$ as $M\to\infty$.
	Thus the pooled-liquidity benchmark $y_{\mathrm{tot}}f(r)$ cannot hold uniformly in $r$ away from $r=1$.
	Moreover, the same example shows that liquidity weights need not be max--min optimal for weighted means at large distortion levels: if the designer instead ignores the shallow pool and sets weights $\widetilde w=(0,1)$, then the oracle depends only on pool~2 and any successful attack must set $t_2=r$, incurring cost $F_2(\widetilde w;r)=M f(r)$. For fixed $r>1$ (in particular for $r>3$), this scales like $\Theta(M)$, while the feasible concentrated attack above achieves cost at most $f(1+(r-1)(M+1))=O(\sqrt{M})$ under liquidity weights. Hence, for $M$ sufficiently large, $\widetilde w$ yields strictly larger minimal manipulation cost than $w^{\star}$.
	Accordingly, Theorem~\ref{thm:weighted-mean-opt} is stated as a small-distortion result.

\paragraph{A concentration bound and a large-distortion design heuristic.}
For any weights $\bm w$ and any target $r\ge 1$, a feasible attack is to leave all pools at $t_j=1$ except one pool~$i$, and set $t_i=1+(r-1)/w_i$; this yields
\[
  F_N(\bm w;r) \;\le\; \min_i y_i^{(0)} f\!\Big(1+\frac{r-1}{w_i}\Big).
\]
Since $f(t)\sim \sqrt{t}$ as $t\to\infty$, this upper bound behaves like $\sqrt{r-1}\,\min_i y_i^{(0)}/\sqrt{w_i}$ for large distortions, suggesting quadratic liquidity weights $w_i\propto (y_i^{(0)})^2$ as a conservative way to equalize the cost of such single-pool attacks.

\subsection*{Second-order expansion and attacker optimum}

Fix weights $w_i\ge 0$ with $\sum_i w_i=1$ and set $r=1+\varepsilon$ with $\varepsilon\to 0^+$.
Write $t_i=1+\delta_i$ with $\delta_i>-1$, so the constraint becomes
\[
  \sum_{i=1}^N w_i \delta_i = \varepsilon.
\]
Since the constant choice $\delta_i\equiv\varepsilon$ is feasible, we have $F_N(\bm w;1+\varepsilon)\le y_{\mathrm{tot}} f(1+\varepsilon)=O(\varepsilon^2)$ uniformly in $\bm w$.
Because $f(1+\delta)\to\infty$ as $\delta\downarrow -1$ or $\delta\to\infty$, any minimizer must satisfy $\max_i|\delta_i|\to 0$ as $\varepsilon\to 0$ (otherwise the objective would be bounded below by a positive constant).
Hence we may use the Taylor expansion
\[
  f(1+\delta) = \frac{\delta^2}{4} + O(\delta^3)\qquad(\delta\to 0),
\]
to obtain
\[
  F_N(\bm w;1+\varepsilon)
  = \frac{1}{4}\,\min\Big\{\sum_{i=1}^N y_i^{(0)}\delta_i^2 : \sum_{i=1}^N w_i\delta_i=\varepsilon\Big\} + o(\varepsilon^2).
\]
The quadratic program is solved by Cauchy--Schwarz:
\[
  \varepsilon^2
  = \Big(\sum_{i=1}^N w_i\delta_i\Big)^2
  \le \Big(\sum_{i=1}^N \frac{w_i^2}{y_i^{(0)}}\Big)\Big(\sum_{i=1}^N y_i^{(0)}\delta_i^2\Big),
\]
with equality iff $\delta_i\propto w_i/y_i^{(0)}$.
Therefore,
\[
  \min\Big\{\sum_{i=1}^N y_i^{(0)}\delta_i^2 : \sum_{i=1}^N w_i\delta_i=\varepsilon\Big\}
  = \frac{\varepsilon^2}{\sum_{i=1}^N w_i^2/y_i^{(0)}},
\]
and thus
\[
  F_N(\bm w;1+\varepsilon)
  = \frac{\varepsilon^2}{4}\,\frac{1}{\sum_{i=1}^N w_i^2/y_i^{(0)}} + o(\varepsilon^2).
\]

\subsection*{Designer optimum}

To maximize the leading-order term, the oracle designer minimizes $\sum_i w_i^2/y_i^{(0)}$ over all weights $w_i\ge 0$ with $\sum_i w_i=1$.
By Cauchy--Schwarz,
\[
  1 = \Big(\sum_{i=1}^N w_i\Big)^2
  \le \Big(\sum_{i=1}^N \frac{w_i^2}{y_i^{(0)}}\Big)\Big(\sum_{i=1}^N y_i^{(0)}\Big)
  = y_{\mathrm{tot}}\sum_{i=1}^N \frac{w_i^2}{y_i^{(0)}},
\]
with equality iff $w_i\propto y_i^{(0)}$.
Hence $\sum_i w_i^2/y_i^{(0)}\ge 1/y_{\mathrm{tot}}$, and the leading-order cost is uniquely maximized by the liquidity weights $w_i^{\star}=y_i^{(0)}/y_{\mathrm{tot}}$, giving
\[
  \sup_{w_i\ge 0,\,\sum_i w_i=1} F_N(\bm w;1+\varepsilon)
  = \frac{\varepsilon^2}{4}\,y_{\mathrm{tot}} + o(\varepsilon^2).
\]

\section{Weighted-Median Multi-Pool Optimization}\label{app:weighted-median-proof}

We now prove the weighted-median weight-design result stated in Theorem~\ref{thm:median-opt-weights}.
 Recall the setup of the $N$-pool median: for quote reserves 
 $y_i^{(0)}>0$ and weights $w_i\ge 0$ with $\sum_i w_i=1$, 
 the attacker’s minimal cost to enforce a distortion factor $r\ge 1$ is
\[
  C^{\mathrm{med}}(r;w) = f(r)\, m(w),
\]
\[
  m(w) := \min\Big\{\sum_{i\in S} y_i^{(0)} : S\subset\{1,\dots,N\},\ \sum_{i\in S} w_i \ge \tfrac{1}{2}\Big\}.
\]
Thus the oracle designer solves
\[
  \sup_{w\in\Delta_N} C^{\mathrm{med}}(r;w)
  = f(r)\,\sup_{w\in\Delta_N} m(w),
  \qquad \Delta_N = \{w_i\ge 0,\ \sum_i w_i=1\}.
\]
We therefore focus on the purely combinatorial quantity $m(w)$.

Let $\bm y=(y_1^{(0)},\dots,y_N^{(0)})$ and $y_{\mathrm{tot}}=\sum_i y_i^{(0)}$. Define
\[
  \Theta(\bm y)
  := \min\Big\{\sum_{i\in S} y_i^{(0)} : S\subset\{1,\dots,N\},\ \sum_{i\in S} y_i^{(0)} \ge \tfrac{1}{2}y_{\mathrm{tot}}\Big\},
\]
i.e., the smallest total depth carried by any subset whose total depth is at least half of $y_{\mathrm{tot}}$.

\begin{lemma}\label{lem:theta-bounds}
For any $\bm y$ as above,
\[
  \tfrac{1}{2}y_{\mathrm{tot}} \;\le\; \Theta(\bm y) \;\le\; \tfrac{1}{2}y_{\mathrm{tot}} + y_{\max}^{(0)},
\]
where $y_{\max}^{(0)} := \max_i y_i^{(0)}$. Moreover, when $N=2$ we have $\Theta(\bm y)=\max(y_1^{(0)},y_2^{(0)})$.
\end{lemma}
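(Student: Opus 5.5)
The lower bound is immediate from the definition. Every set $S$ admissible in the minimization defining $\Theta(\bm y)$ satisfies $\sum_{i\in S} y_i^{(0)}\ge \tfrac12 y_{\mathrm{tot}}$. Hence the minimum over such sets is at least $\tfrac12 y_{\mathrm{tot}}$. The feasible family is nonempty because $S=\{1,\dots,N\}$ qualifies, so the minimum is attained.

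For the upper bound I will exhibit one feasible set whose depth is at most $\tfrac12 y_{\mathrm{tot}}+y_{\max}^{(0)}$. The construction is greedy:
\begin{itemize}
\item order the pools arbitrarily (say by index);
\item add pools one at a time until the cumulative depth first reaches or exceeds $\tfrac12 y_{\mathrm{tot}}$;
\item let $j$ be the index at which this first happens, and take $S=\{1,\dots,j\}$.
\end{itemize}
By minimality of $j$, the prefix $\{1,\dots,j-1\}$ has depth strictly less than $\tfrac12 y_{\mathrm{tot}}$. Therefore
\[
\sum_{i\in S} y_i^{(0)} < \tfrac12 y_{\mathrm{tot}} + y_j^{(0)} \le \tfrac12 y_{\mathrm{tot}}+y_{\max}^{(0)}.
\]
Since $S$ is feasible, $\Theta(\bm y)\le \sum_{i\in S} y_i^{(0)}$, which gives the bound. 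The only care needed is that $j$ exists, which holds because the full set reaches $y_{\mathrm{tot}}\ge \tfrac12 y_{\mathrm{tot}}$.

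For $N=2$ I will enumerate the nonempty subsets: $\{1\}$, $\{2\}$ and $\{1,2\}$. The set $\{1,2\}$ is feasible, with depth $y_{\mathrm{tot}}$. A singleton $\{i\}$ is feasible exactly when $y_i^{(0)}\ge \tfrac12(y_1^{(0)}+y_2^{(0)})$, that is, when $y_i^{(0)}\ge y_{3-i}^{(0)}$. So the larger pool is always feasible, with depth $\max(y_1^{(0)},y_2^{(0)})$, and this beats $y_{\mathrm{tot}}$ because depths are positive. The smaller pool is feasible only under a tie, where it has the same depth. Hence $\Theta(\bm y)=\max(y_1^{(0)},y_2^{(0)})$.

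None of these steps is a real obstacle. The subtle point is the strict inequality in the greedy step, which is what yields the $+y_{\max}^{(0)}$ slack rather than something larger.
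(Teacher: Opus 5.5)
Your proof is correct and follows the same route as the paper. The lower bound comes straight from the definition, the upper bound from taking the first prefix whose cumulative depth reaches $\tfrac12 y_{\mathrm{tot}}$ (the last added pool contributes at most $y_{\max}^{(0)}$), and the $N=2$ case from checking subsets directly. Your version is slightly cleaner than the paper's. You only claim $\Theta(\bm y)\le\sum_{i\in S}y_i^{(0)}$ for an arbitrary ordering, whereas the paper sorts depths increasingly and asserts that this prefix equals $\Theta(\bm y)$. That equality fails in general: $\bm y=(1,1,3)$ gives a prefix of depth $5$ but $\Theta(\bm y)=3$. Only the inequality is needed, so the paper's bound still holds.
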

\begin{proof}
By definition, every admissible subset $S$ satisfies $\sum_{i\in S} y_i^{(0)} \ge \tfrac{1}{2}y_{\mathrm{tot}}$, so the minimum is at least $\tfrac{1}{2}y_{\mathrm{tot}}$. For the upper bound, sort indices so that $y_{(1)}^{(0)}\le\cdots\le y_{(N)}^{(0)}$ and let $k$ be the smallest index with $\sum_{i=1}^k y_{(i)}^{(0)}\ge\tfrac{1}{2}y_{\mathrm{tot}}$. Then $\Theta(\bm y) = \sum_{i=1}^k y_{(i)}^{(0)}$ and
\[
  \Theta(\bm y)
  \le \tfrac{1}{2}y_{\mathrm{tot}} + y_{(k)}^{(0)}
  \le \tfrac{1}{2}y_{\mathrm{tot}} + y_{\max}^{(0)}.
\]
When $N=2$, either $y_1^{(0)}\ge\tfrac{1}{2}y_{\mathrm{tot}}$ or $y_2^{(0)}\ge\tfrac{1}{2}y_{\mathrm{tot}}$ (or both), and the minimal subset achieving the half-depth threshold is the index with larger depth, so $\Theta(\bm y)=\max(y_1^{(0)},y_2^{(0)})$.
\end{proof}

\begin{lemma}\label{lem:median-cost-characterization}
For any $w\in\Delta_N$ and $r>0$,
\[
  C_{\mathrm{med}}(r;w) = f(r)\, m(w),
\]
with $m(w)$ as defined above.
\end{lemma}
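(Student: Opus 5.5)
We prove the identity as two matching inequalities, $C_{\mathrm{med}}(r;w)\le f(r)\,m(w)$ (achievability) and $C_{\mathrm{med}}(r;w)\ge f(r)\,m(w)$ (lower bound). Throughout, the attacker's cost is the sum over pools of the single-pool losses $y_i^{(0)} f(t_i)$ with $t_i=p_i^{(1)}/p_0$, and $f(t)=f(1/t)$. By this symmetry it suffices to treat the upward target $\widetilde p\ge r\,p_0$; the downward case $\widetilde p\le p_0/r$ is the mirror image, with ``at least'' replaced by ``at most''. For $r<1$ we replace $r$ by $1/r$ and use $f(r)=f(1/r)$. The case $r=1$ is trivial, since both sides vanish.

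\textbf{Achievability.} Let $S$ attain the minimum in $m(w)$. Move every pool in $S$ to exactly $r p_0$ and leave all other pools at $p_0$.
\begin{itemize}
\item The cumulative weight of quotes strictly below $r p_0$ is then $1-\sum_{i\in S}w_i\le 1/2$.
\item Since the lower weighted median requires cumulative weight at least $1/2$, no level below $r p_0$ qualifies, so $\widetilde p\ge r p_0$.
\item If the complement has weight exactly $1/2$, the lower median would return $p_0$. In that tie case we take a feasible subset with $\sum_{i\in S}w_i>1/2$; equivalently, we read the constraint in \eqref{eq:wmed_cover} with the paper's tie convention.
\end{itemize}
I expect this tie-breaking bookkeeping to be the main delicate point. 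I would handle it by stating the covering constraint exactly as the condition that the weight strictly below $rp_0$ is less than $1/2$, which matches the definition of $m$ under the paper's convention. The cost of this attack is $f(r)\sum_{i\in S}y_i^{(0)}=f(r)\,m(w)$.

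\textbf{Lower bound.} Take any attack $(t_i)$ achieving $\widetilde p\ge r p_0$, and let $S=\{i: t_i\ge r\}$.
\begin{itemize}
\item If $\sum_{i\in S}w_i<1/2$ (again reading the threshold with the tie convention), then the quotes below $r p_0$ carry weight more than $1/2$.
\item In that case the lower median is some quote strictly below $r p_0$, a contradiction. Hence $S$ is feasible for $m(w)$.
\end{itemize}
Next we use that $f$ is increasing on $[1,\infty)$ and nonnegative everywhere. This gives
$$\sum_i y_i^{(0)}f(t_i)\;\ge\;\sum_{i\in S}y_i^{(0)}f(t_i)\;\ge\; f(r)\sum_{i\in S}y_i^{(0)}\;\ge\; f(r)\,m(w).$$
Taking the infimum over attacks gives $C_{\mathrm{med}}\ge f(r)m(w)$.

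Combining the two inequalities yields $C_{\mathrm{med}}(r;w)=f(r)\,m(w)$.
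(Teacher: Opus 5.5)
Your two-inequality structure is sound, and in one respect it is more careful than the paper's one-paragraph argument. The lower bound via $S=\{i:t_i\ge r\}$, $f\ge 0$, and monotonicity of $f$ on $[1,\infty)$ genuinely rules out overshooting and partial moves; the paper only asserts that ``each moved pool must be at $rp_0$''. The gap is in the tie case, and it comes from your opening reduction. The lower weighted median is \emph{not} reflection-symmetric, so the downward case is not the mirror image of the upward one. If you move a minimizer $S$ of $m(w)$ to $rp_0$, the median reaches $rp_0$ only if $\sum_{i\in S}w_i>\tfrac12$. When $\sum_{i\in S}w_i=\tfrac12$ the upward attack fails, and your patch does not repair this. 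Switching to a subset of weight $>\tfrac12$, or reading the constraint strictly, proves a statement about a different quantity, not about $m(w)$, which is defined with $\ge\tfrac12$. Your claim that the strict condition ``matches the definition of $m$'' is false. Concretely, take $N=2$, $w=(\tfrac12,\tfrac12)$, $y_1^{(0)}<y_2^{(0)}$. The cheapest upward attack costs $f(r)\,(y_1^{(0)}+y_2^{(0)})$, as the paper's own two-pool tie discussion notes, whereas $f(r)\,m(w)=f(r)\,y_1^{(0)}$.

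The missing idea is to use the downward direction, which the cost-of-manipulation definition permits since it only requires $\max\{\hat p/p^\star,p^\star/\hat p\}\ge r$. Move a minimizer $S$ of $m(w)$ to $p_0/r$ and leave the other pools at $p_0$. The cumulative weight at or below $p_0/r$ is then $\sum_{i\in S}w_i\ge\tfrac12$, so the lower median is at most $p_0/r$. The cost is $f(1/r)\sum_{i\in S}y_i^{(0)}=f(r)\,m(w)$, which gives achievability with no tie caveat. The lower bound must then be run in both directions. Your argument shows that an upward success forces $\sum_{t_i\ge r}w_i>\tfrac12$. A downward success forces $\sum_{t_i\le 1/r}w_i\ge\tfrac12$, and there you use that $f$ is decreasing on $(0,1]$ with $f(1/r)=f(r)$. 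Both index sets are feasible for $m(w)$, so the two-sided cost equals $f(r)\,m(w)$ exactly. If instead $C_{\mathrm{med}}$ is read as an upward-only cost, the identity is simply false at ties, and that is what your patch is implicitly running into. The paper's own proof glosses over the same point, since it asserts that a $\ge\tfrac12$ cover suffices for the upward move.
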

\begin{proof}
By definition of the weighted median, moving the median from $p_0$ to $r p_0$ requires the attacker to select a subset $S$ of pools whose cumulative weight is at least $1/2$ and move exactly those pools to $r p_0$; moving any pool to a price in $(p_0,rp_0)$ does not change the median level. Since all pools start at $p_0$ and each moved pool must be at $r p_0$, the attack cost is the sum of per-pool manipulation costs $C_i(r)=y_i^{(0)}f(r)$ over $i\in S$. Minimizing over all subsets with $\sum_{i\in S} w_i\ge 1/2$ yields precisely the expression for $m(w)$.
\end{proof}

\begin{proposition}\label{prop:median-weights-appendix}
For any $\bm y$ as above and any $r\ge 1$,
\[
  \sup_{w\in\Delta_N} C_{\mathrm{med}}(r;w)
  = f(r)\,\Theta(\bm y),
\]
attained by the liquidity weights $w_i^{\star}=y_i^{(0)}/y_{\mathrm{tot}}$. In particular,
\[
  \sup_{w\in\Delta_N} m(w) = \Theta(\bm y).
\]
\end{proposition}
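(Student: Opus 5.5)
By Lemma~\ref{lem:median-cost-characterization}, $C_{\mathrm{med}}(r;w)=f(r)\,m(w)$ for every $w\in\Delta_N$. Since $f(r)\ge 0$ does not depend on $w$, the proposition follows once we show $\sup_{w\in\Delta_N} m(w)=\Theta(\bm y)$ with the supremum attained at $w^{\star}$. I will prove two inequalities: an attainment statement $m(w^{\star})=\Theta(\bm y)$ and a universal upper bound $m(w)\le\Theta(\bm y)$ for all $w\in\Delta_N$.

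\emph{Attainment.} Substituting $w_i^{\star}=y_i^{(0)}/y_{\mathrm{tot}}$ turns the covering constraint $\sum_{i\in S}w_i^{\star}\ge \tfrac12$ into $\sum_{i\in S}y_i^{(0)}\ge\tfrac12 y_{\mathrm{tot}}$. The program defining $m(w^{\star})$ is then literally the program defining $\Theta(\bm y)$, so $m(w^{\star})=\Theta(\bm y)$.

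\emph{Upper bound.} This is the main step. Fix an arbitrary $w\in\Delta_N$. I need a subset $S$ with $\sum_{i\in S}w_i\ge\tfrac12$ and $\sum_{i\in S}y_i^{(0)}\le\Theta(\bm y)$. Let $S^{\star}$ be a minimizer for $\Theta(\bm y)$, so $\sum_{i\in S^{\star}}y_i^{(0)}=\Theta(\bm y)$, and let $T=\{1,\dots,N\}\setminus S^{\star}$. The plan is a complementation argument.
\begin{itemize}
\item If $\sum_{i\in S^{\star}}w_i\ge\tfrac12$, take $S=S^{\star}$; its depth is exactly $\Theta(\bm y)$.
\item Otherwise $\sum_{i\in T}w_i>\tfrac12$, so $T$ is feasible for $m(w)$. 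Its depth is $y_{\mathrm{tot}}-\Theta(\bm y)\le \tfrac12 y_{\mathrm{tot}}\le\Theta(\bm y)$, where both inequalities use $\Theta(\bm y)\ge\tfrac12 y_{\mathrm{tot}}$ from Lemma~\ref{lem:theta-bounds}.
\end{itemize}
In either case $m(w)\le\Theta(\bm y)$. The key observation is that the sets $S^{\star}$ and $T$ partition the pools, so at least one of them carries weight at least $\tfrac12$ under any $w$. Both have depth at most $\Theta(\bm y)$: the first by definition, the second because it is the complement of a set with at least half the total depth. Zero weights cause no trouble, since the argument never divides by $w_i$.

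Combining the two parts gives $\sup_{w}m(w)=\Theta(\bm y)$, attained at $w^{\star}$. Multiplying by $f(r)$ yields $\sup_{w}C_{\mathrm{med}}(r;w)=f(r)\,\Theta(\bm y)$. The bounds on $\Theta(\bm y)$ and the $N=2$ formula stated in Theorem~\ref{thm:median-opt-weights} are exactly Lemma~\ref{lem:theta-bounds}, which completes that theorem.
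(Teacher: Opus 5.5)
Your proposal is correct and follows the paper's proof essentially step for step. Both reduce to $m(w)$ via $C_{\mathrm{med}}(r;w)=f(r)\,m(w)$, get attainment at $w^{\star}$ by noting that the covering program for $m(w^{\star})$ coincides with the one defining $\Theta(\bm y)$, and get the upper bound from the minimizer of $\Theta(\bm y)$ and its complement, using $\Theta(\bm y)\ge\tfrac12 y_{\mathrm{tot}}$ twice.
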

\begin{proof}
By Lemma~\ref{lem:median-cost-characterization} it suffices to study $\sup_{w\in\Delta_N} m(w)$.

\emph{Lower bound and attainment.} Take liquidity weights $w_i^{\star} = y_i^{(0)}/y_{\mathrm{tot}}$. For any subset $S$,
\[
  \sum_{i\in S} w_i^{\star} \;\ge\; \tfrac{1}{2}
  \quad\Longleftrightarrow\quad
  \sum_{i\in S} y_i^{(0)} \;\ge\; \tfrac{1}{2}y_{\mathrm{tot}}.
\]
Thus
\[
  m(w^{\star})
  = \min\Big\{\sum_{i\in S} y_i^{(0)} : \sum_{i\in S} y_i^{(0)}\ge\tfrac{1}{2}y_{\mathrm{tot}}\Big\}
  = \Theta(\bm y),
\]
so $\sup_{w} m(w)\ge m(w^{\star})=\Theta(\bm y)$ and therefore
\[
  \sup_{w} C_{\mathrm{med}}(r;w)
  \;\ge\; f(r)\,\Theta(\bm y).
\]

\emph{Upper bound.} Fix an arbitrary $w\in\Delta_N$. By definition of $\Theta(\bm y)$ there are only finitely many candidate subsets $S\subset\{1,\dots,N\}$, so the minimum in its definition is attained. Let $S^{\theta}$ be any subset achieving this minimum, i.e.,
\[
  \sum_{i\in S^{\theta}} y_i^{(0)} = \Theta(\bm y),\qquad
  \sum_{i\in S^{\theta}} y_i^{(0)} \ge \tfrac{1}{2}y_{\mathrm{tot}}.
\]
There are two cases.

\begin{itemize}[leftmargin=*]
  \item If $\sum_{i\in S^{\theta}} w_i \ge \tfrac{1}{2}$, then $S^{\theta}$ is feasible for $m(w)$, so
  \[
    m(w) \;\le\; \sum_{i\in S^{\theta}} y_i^{(0)} = \Theta(\bm y).
  \]
  \item If $\sum_{i\in S^{\theta}} w_i < \tfrac{1}{2}$, let $S^c$ be its complement. Then
  \[
    \sum_{i\in S^c} w_i = 1 - \sum_{i\in S^{\theta}} w_i > \tfrac{1}{2},
  \]
  so $S^c$ is feasible for $m(w)$. Its cost is
  \[
    \sum_{i\in S^c} y_i^{(0)}
    = y_{\mathrm{tot}} - \Theta(\bm y)
    \;\le\; y_{\mathrm{tot}} - \tfrac{1}{2}y_{\mathrm{tot}}
    = \tfrac{1}{2}y_{\mathrm{tot}}
    \;\le\; \Theta(\bm y),
  \]
  where the first inequality uses the lower bound $\Theta(\bm y)\ge \tfrac{1}{2}y_{\mathrm{tot}}$ from Lemma~\ref{lem:theta-bounds} (which implies $y_{\mathrm{tot}}-\Theta(\bm y)\le y_{\mathrm{tot}}-\tfrac{1}{2}y_{\mathrm{tot}}$), and the last inequality uses the same bound to conclude $\tfrac{1}{2}y_{\mathrm{tot}}\le \Theta(\bm y)$. Hence
  \[
    m(w) \;\le\; \sum_{i\in S^c} y_i^{(0)} \;\le\; \Theta(\bm y).
  \]
\end{itemize}
In all cases we have $m(w)\le\Theta(\bm y)$, so $\sup_{w} m(w)\le\Theta(\bm y)$. Combined with the lower bound and attainment at $w^{\star}$, this proves the claim.
\end{proof}

\section{Star-Architecture Multi-Asset Extension}\label{app:multi-asset-star}

\begin{theorem}[Star-architecture optimal weights]\label{thm:star-opt}
Let the asset set be $\mathcal{A}=\{0,1,\dots,M\}$, with asset $0$ a numéraire (e.g., a stablecoin). For each $a\neq 0$, suppose there are CPMM pools indexed by $i\in\mathcal{I}_{a,0}$ quoting $a$ against the numéraire with quote reserves $y_{a,i}^{(0)}$, and let the oracle report
\[
  \hat p_a = A_a\bigl(\{p_{a,i}^{(1)}\}_{i\in\mathcal{I}_{a,0}}\bigr),\qquad a=1,\dots,M,
\]
where each $A_a$ is either a weighted mean or a weighted median with weights $w_{a,i}>0$, $\sum_i w_{a,i}=1$.

Fix target distortion factors $r_a=1+\varepsilon_a$ with $\varepsilon_a\to 0^+$ for $a=1,\dots,M$, and write $\mathrm{Cost}_a(r_a;w_{a,\cdot})$ for the minimal economic loss required to distort $\hat p_a$ by a factor of at least $r_a$ (upward or downward).
Then,
\begin{enumerate}
  \item for each asset $a$, liquidity weights $w_{a,i}^{\star}\propto y_{a,i}^{(0)}$ maximize the leading-order (second-order in $\varepsilon_a$) term of $\mathrm{Cost}_a(1+\varepsilon_a;w_{a,\cdot})$ when $A_a$ is a weighted mean, and
  \item maximize $\mathrm{Cost}_a(r_a;w_{a,\cdot})$ for any $r_a\ge 1$ when $A_a$ is a weighted median
\end{enumerate}
Moreover, the star architecture decouples assets, so the minimal total cost to distort the vector $(\hat p_a)_{a\ne 0}$ by factors $r_a$ equals $\sum_{a=1}^M \mathrm{Cost}_a(r_a;w_{a,\cdot}^{\star})$.
\end{theorem}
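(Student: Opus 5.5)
The plan is to reduce the star-architecture statement to the single-asset results already proved, Theorem~\ref{thm:weighted-mean-opt} and Theorem~\ref{thm:median-opt-weights}. The reduction rests on a decoupling lemma proved first.

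\emph{Decoupling.} In a star architecture every pool $i\in\mathcal{I}_{a,0}$ trades only asset $a$ against the numéraire. Its reserves, and hence its post-trade quote $p_{a,i}^{(1)}$, are therefore affected only by trades placed in that pool. An attack $u$ thus decomposes as $u=(u_a)_{a=1}^M$, where $u_a$ collects the trades on pools in $\mathcal{I}_{a,0}$. Two facts follow.
\begin{enumerate}
\item The output $\hat p_a$ depends on $u_a$ alone.
\item The mark-to-market loss in numéraire units is additive, $C(u)=\sum_a C_a(u_a)$.
\end{enumerate}
For the second fact, value each pool's trade at the efficient prices $p_a^\star$; the per-pool losses are exactly the single-pool formulas \eqref{eq:upward_cost}--\eqref{eq:downward_cost}, written in the numéraire as quote asset, with $y_{a,i}^{(0)}$ in place of $y_0$.

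Because both the constraint set and the objective are products or sums over $a$, the problem of distorting every $\hat p_a$ by at least $r_a$ splits. Its infimum is the sum over $a$ of the separate infima:
\[
\inf\Big\{\sum_a C_a(u_a): D_a(u_a)\ge r_a\ \forall a\Big\}=\sum_a \inf\{C_a(u_a): D_a(u_a)\ge r_a\}=\sum_a \mathrm{Cost}_a(r_a;w_{a,\cdot}).
\]
Specialising to $w^\star$ gives the ``moreover'' claim. The same identity shows the designer's problem separates: the designer picks each $w_{a,\cdot}$ independently, and the total cost is a sum of terms each depending on one weight vector.

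\emph{Per-asset optimality.} For a fixed asset $a$, $\mathrm{Cost}_a$ is exactly the single-pair problem of Section~\ref{sec:npool-exact}, with pools $\mathcal{I}_{a,0}$, depths $y_{a,i}^{(0)}$ and common initial price $p_a^\star$. Upward and downward distortions cost the same because $f(t)=f(1/t)$ under $t_i\mapsto 1/t_i$. For the downward weighted mean this symmetry needs a short check, since the constraint $\sum w_i t_i=1/r$ is not literally mirrored. To second order, however, $t_i=1-\delta_i$ gives the same quadratic program as in Appendix~\ref{app:weighted-mean-proof}, so the same expansion holds.

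With that check in place, the two cases follow directly.
\begin{enumerate}
\item If $A_a$ is a weighted mean, Theorem~\ref{thm:weighted-mean-opt} gives
\[
\mathrm{Cost}_a(1+\varepsilon_a;w)=\frac{\varepsilon_a^2}{4}\Big(\sum_i w_i^2/y_{a,i}^{(0)}\Big)^{-1}+o(\varepsilon_a^2).
\]
The leading term is uniquely maximised at $w\propto y_{a,\cdot}^{(0)}$.
\item If $A_a$ is a weighted median, Theorem~\ref{thm:median-opt-weights} (Proposition~\ref{prop:median-weights-appendix}) gives $\sup_w \mathrm{Cost}_a=f(r_a)\Theta(\bm y_a)$, attained at liquidity weights for every $r_a\ge1$.
\end{enumerate}

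\emph{Main obstacle.} The delicate step is justifying decoupling rigorously. Two things must be ruled out. First, cross-asset trades or routing through the numéraire must not create cost savings; with no cross pairs there is no path between asset $a$ and asset $b$ other than separate pools, so this holds. Second, the numéraire valuation must be unaffected by on-chain trades. This holds because the $p_a^\star$ are exogenous, by the EMH assumption on the off-chain price.

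A further subtlety is that the mean result is stated with $\varepsilon_a\to0$ independently for each $a$. Since the total is a finite sum, per-asset maximisation of each leading term is equivalent to maximising the leading order of the total. No uniformity across assets is required beyond $M$ being finite.
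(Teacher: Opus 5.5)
Your proposal is correct and follows essentially the same route as the paper. It decouples attacks and mark-to-market losses across the disjoint pool groups $\mathcal{I}_{a,0}$, so that both the attacker's infimum and the designer's supremum split asset by asset, then applies Theorem~\ref{thm:weighted-mean-opt} and Theorem~\ref{thm:median-opt-weights} to each asset; your second-order check that the downward weighted-mean problem (constraint $\sum_i w_i t_i = 1/r_a$) has the same leading term is a useful refinement, since the paper only appeals to $f(r)=f(1/r)$ there.
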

\begin{proof}
Throughout, assets are indexed by $a\in\{1,\dots,M\}$, with asset $0$ a numéraire. For each $a$ we write $\mathcal{I}_{a,0}$ for the set of CPMM pools quoting $a$ against the numéraire, and we write $p_{a,i}^{(t)}$ for the on-chain marginal price of one unit of $a$ in numéraire units in pool $i\in\mathcal{I}_{a,0}$ at time $t\in\{0,1\}$ (pre-attack $t=0$, post-attack $t=1$).

\subsection*{Reduction to Single-Asset Problems}
Fix an asset $a\neq 0$ and a target distortion factor $r_a=1+\varepsilon_a$ with $\varepsilon_a\to 0^+$.
In the star architecture, the oracle price for $a$ is
\[
  \hat p_a = A_a\bigl(\{p_{a,i}^{(1)}\}_{i\in\mathcal{I}_{a,0}}\bigr),
\]
where $A_a$ is either a weighted mean or a weighted median with weights $w_{a,i}>0$, $\sum_i w_{a,i}=1$.
By construction, $\hat p_a$ depends only on the pools in $\mathcal{I}_{a,0}$, and trades on pools for other assets do not enter $A_a$.

Let $\mathrm{Cost}_a(r_a;w_{a,\cdot})$ denote the minimal economic loss (in numéraire units) that an attacker must incur to enforce a distortion of at least $r_a$ on $\hat p_a$ (upward or downward) using trades on the CPMMs in $\mathcal{I}_{a,0}$, holding all other assets fixed.
This is exactly the single-asset multi-pool manipulation problem analyzed in the main text, with $(X,Y)$ replaced by $(a,0)$, quote reserves $\{y_{a,i}^{(0)}\}$, and aggregation weights $\{w_{a,i}\}$.

Consequently:
\begin{itemize}[leftmargin=*]
  \item If $A_a$ is a weighted mean, Theorem~\ref{thm:weighted-mean-opt} applies (as $\varepsilon_a\to 0$) and implies that the maximizing weights $w_{a,i}^{\star}$ are the liquidity weights
  \[
    w_{a,i}^{\star} \;=\; \frac{y_{a,i}^{(0)}}{\sum_{j\in\mathcal{I}_{a,0}} y_{a,j}^{(0)}}.
  \]
  \item If $A_a$ is a weighted median, Theorem~\ref{thm:median-opt-weights} yields the same conclusion: liquidity weights $w_{a,i}^{\star}\propto y_{a,i}^{(0)}$ maximize the minimal manipulation cost for $\hat p_a$.
\end{itemize}
Thus, for each asset separately, liquidity weights are max–min optimal within weighted medians and asymptotically max–min optimal within weighted means (as $\varepsilon_a\to 0$).

\subsection*{Separability Across Assets}

In the star architecture, the CPMM pools split into $M$ disjoint groups $\{\mathcal{I}_{a,0}\}_{a=1}^M$, one per asset–numéraire pair.
The attacker’s total cost to distort a vector of prices $(\hat p_a)_{a\ne 0}$ by factors $r_a$ is the sum of the per-asset costs:
\[
  \sum_{a=1}^M \mathrm{Cost}_a(r_a;w_{a,\cdot}),
\]
because trades on $\mathcal{I}_{a,0}$ affect only asset $a$ and costs are measured in the common numéraire.
There are no cross-terms in the objective, and no constraints couple trades across different $\mathcal{I}_{a,0}$.

The oracle designer’s max–min problem is therefore
\[
  \sup_{\{w_{a,\cdot}\}}\;\inf_{\text{attacks}}\; \sum_{a=1}^M \mathrm{Cost}_a(r_a;w_{a,\cdot})
  \;=\; \sum_{a=1}^M \sup_{w_{a,\cdot}}\;\inf_{\text{attacks on }\mathcal{I}_{a,0}} \mathrm{Cost}_a(r_a;w_{a,\cdot}),
\]
where the equality follows from separability of both the cost and the admissible attack sets across assets.
Maximizing each term on the right-hand side independently and using the single-asset results above shows that the joint optimizer is given by choosing, for every $a$, the liquidity weights $w_{a,i}^{\star}\propto y_{a,i}^{(0)}$.

With this choice, the minimal total cost decomposes as the sum of the optimal per-asset values $\mathrm{Cost}_a(r_a;w_{a,\cdot}^{\star})$.
\end{proof}

\end{document}